\newtheorem{thm}{Theorem}
\newtheorem{lemma}{Lemma}[section]
\newtheorem{defn}{Definition}[section]
\newtheorem{conjecture}{Conjecture}
\renewcommand{\S}{\mathcal{S}}
\newcommand\be{\begin{equation}}
\newcommand\ee{\end{equation}}
\newcommand\bea{\begin{eqnarray}}
\newcommand\eea{\end{eqnarray}}
\newcommand\beaa{\begin{eqnarray*}}
\newcommand\eeaa{\end{eqnarray*}}
\newcommand\bay{\begin{array}}
\newcommand\eay{\end{array}}
\newcommand\ba{\begin{align}}
\newcommand\ea{\end{align}}
\newcommand\bR{{\mathbb{R}}}
\newcommand{\red}[1]{\textcolor{black}{#1}}
\newcommand{\blue}[1]{\textcolor{black}{#1}}
\begin{document}
\bibliographystyle{plainnat}

\title[Evolution of natal dispersal]
{Evolution of natal dispersal in spatially heterogenous environments}
\author[R.S. Cantrell]{Robert Stephen Cantrell}
\address{Department of Mathematics,
University of Miami, Coral Gables, FL 33124}
\email{rsc@math.miami.edu}
\author[C. Cosner]{Chris Cosner}
\address{Department of Mathematics,
University of Miami, Coral Gables, FL 33124}
\email{gcc@math.miami.edu}
\author[Y. Lou]{Yuan Lou}
\address{Institute for Mathematical Sciences, Renmin University of China,
Beijing 100872, PRC
and Department of Mathematics,
Ohio State University, Columbus, OH 43210}
\email{lou@math.ohio-state.edu}
\author[S.J. Schreiber]{Sebastian J. Schreiber}
\address{Department of Evolution and Ecology,
 University of California at Davis, Davis,
 California 95616}
\email{sschreiber@ucdavis.edu}

\date{\today}

\begin{abstract}
 Understanding the evolution of dispersal is an important issue in evolutionary ecology.  For continuous time models in which individuals disperse throughout their lifetime, it has been shown that a balanced dispersal strategy, which results in an ideal free distribution, is evolutionary stable in spatially varying but temporally constant environments. Many species, however, primarily disperse prior to reproduction (natal dispersal) and less commonly between reproductive events (breeding dispersal). These species include territorial species such as birds and reef fish, and sessile species such as plants,  and mollusks. As demographic and dispersal terms combine in a multiplicative way for models of natal dispersal, rather than the additive way for the previously studied models, we develop new mathematical methods to study the evolution of natal dispersal for continuous-time and discrete-time models. A fundamental ecological dichotomy is identified for the non-trivial equilibrium of these models: (i) the per-capita growth rates for individuals in all patches \red{are  equal}   to zero, or (ii) individuals in some patches experience negative per-capita growth rates, while individuals in other patches experience positive per-capita growth rates. The first possibility corresponds to an ideal-free distribution, while the second possibility corresponds to a ``source-sink'' spatial structure. We prove that populations with a dispersal strategy leading to an ideal-free distribution  displace populations  with dispersal strategy leading to a source-sink spatial structure. When there are patches which can not sustain a population, ideal-free strategies can be achieved by sedentary populations, and we show  that these populations can displace populations with any irreducible dispersal strategy. Collectively, these results support that evolution selects for natal or breeding dispersal strategies which lead to ideal-free distributions in spatially heterogenous, but temporally homogenous, environments.

\bigskip

\noindent{\sc Keywords:} Evolution of dispersal; Ideal free distribution;
Evolutionary stability; Patchy environments; {Source-sink populations}
\bigskip

\noindent{\sc AMS Classification}: 34D23, 92D25

\end{abstract}


\maketitle


\section{Introduction}
Dispersal is an important aspect of the life histories of many if not most organisms.  However, it was shown by  \citet{Hastings} that selection generally favors slower rates of dispersal in spatially varying but temporally constant environments.  This is an example of a widespread feature of spatial models in population dynamics and genetics known as the reduction phenomenon, which is that movement or mixing generally reduces growth~\citep{altenberg-12}. Hastings considered types of dispersal such as simple diffusion and symmetric discrete diffusion that did not  allow organisms to perfectly match the distribution  of resources in their environment.  There are dispersal strategies that do allow organisms to match the distribution  of resources in their environment, and it was shown by  \citet{McPeek}  in numerical experiments with discrete time models on two habitat patches  that such strategies were favored by selection. At equilibrium the populations using those strategies had equal fitness in the two patches, which is one of the characteristics of an ideal free distribution.  The ideal free distribution was introduced by \citet{Fretwell} as a {heuristic} theory of how organisms would distribute themselves if individuals could assess their fitness in all locations and were free to move so as to optimize their fitness.  In a population that is at equilibrium and has an ideal free distribution all individuals would have equal fitness and there would be no net movement of individuals, {as a change in local densities would lead to a reduction of fitness for some individuals}.  In the context of population models it is natural to use the per capita growth rate as a proxy for fitness, {in which case the per-capita growth rates of all ideal-free individuals equal zero at equilibrium.} This observation can be used to characterize the ideal free distribution in population models.  It turns out that in many modeling contexts 
\red{ideal free dispersal strategies are evolutionarily stable} in the sense that a population using it  cannot be invaded by an ecologically similar population using a dispersal strategy that does not result in an ideal free distribution. 

In his dissertation, \citet{altenberg-84} conjectured  that strategies leading to an ideal free distribution would be evolutionarily stable.  It turns out \red{ that}  is indeed the case in various types of models.  One approach to modeling the evolution of dispersal, which we shall not pursue here, is based directly on game theory. The implications of the ideal free distribution in that context are described in \citep{vanbaalen-sabelis-93,amnat-00,krivan-03,cressman-etal-04,krivan-etal-08}. The approach that we will take is inspired by the theory of adaptive dynamics.  We will consider {models of populations dispersing in patchy landscapes} and perform what amounts to a pairwise invasibility analysis to compare different dispersal strategies.  {A dispersal strategy is} evolutionarily stable if a population using it can resist invasion by other populations using other strategies.  In fact, we will show that in many cases populations using {dispersal} strategies leading to an ideal free distribution can actually exclude ecologically similar competitors that are using other strategies.  Results on the evolutionary stability of ideal free dispersal have been obtained in various modeling contexts, including  reaction-diffusion-advection equations \citep{cantrell-etal-10, averill-etal-12,korobenko-braverman-14}, discrete diffusion models \citep{pardron-trevisan-06, cantrell-etal-07, cantrell-etal-12},  nonlocal dispersal models~\citep{cosner-14, cantrell-etal-2012b}, and discrete time models \red{ \citep{cantrell-etal-07, KLS}}. {All of these models, however, assume that individuals are either semelparous, as in the case of the discrete-time models, or assume that individuals disperse throughout their lifetime. }

{In many species, dispersing prior to reproducing (natal dispersal) is much more common than dispersing between successive reproductive events (breeding dispersal) \citep{harts-etal-16}. Natal dispersal is the only mode of dispersal for sessile species such as plants with dispersing seeds or sessile marine invertebrates with dispersing larvae. Many territorial species, such as birds or reef fish, often exhibit long natal dispersal distances and little or no dispersal after establishing a territory~\citep{greenwood-harvey-82}. For example, \citet{paradis-etal-98} found that the mean natal dispersal distance for $61$ of $69$ terrestrial bird species was greater than their mean breeding dispersal distance. For species exhibiting significant natal dispersal, the assumption of individuals dispersing throughout their lifetime is inappropriate. A more appropriate simplifying assumption is that these species only disperse a significant amount prior to reproduction and negligible amounts after reproduction. Many continuous time metapopulation models~\citep{hanski-03, Mouquet2002} and discrete time models for iteroparous, as well as semelparous, populations~\citep{hastings-botsford-06,KLS}  have a structure consistent with this assumption: individuals disperse between patches right after birth and settle on one patch for the remainder of their lifetime. 

In the present paper we will derive results on the evolutionary stability of ideal free dispersal strategies for a general class of models {accounting for natal dispersal. We begin by examining the structure of their equilibria and their global stability. The non-trivial equilibrium, when it exists, will be shown to exhibit a dichotomy: per-capita growth rates are equal to zero in all patches (i.e. an ideal free distribution), or some individuals experience negative per-capita growth rates while others experience positive per-capita growth rates. We identify which density-independent dispersal strategies give rise to the ideal-free distributions under equilibrium conditions and show that populations \red{employing} these dispersal strategies exclude populations  \red{employing}  non-ideal free dispersal strategies. In the process, we}  verify a conjecture of  \citet{KLS} and extend some of the results of that paper. Furthermore, we show that in models where {some dispersing individuals} are forced to disperse into {patches only supporting negative per-capita growth rates (sink patches)}, there is selection for slower dispersal.  (In such situations the only strategy that can produce an ideal free distribution is the strategy of no dispersal at all.)

\section{Sources, sinks, and single species dynamics}

\subsection{The general model and assumptions}
We consider two types of models of populations in patchy environments: (i)  models which track population densities in a network of patches, and (ii) patch occupancy models which track the frequencies of occupied sites in a collection of patches \red{,} i.e. metapopulation models.  For both models, we assume that individuals only disperse shortly after reproduction, e.g. plants via seeds, sessile marine invertebrates via larvae, territorial species such as reef fish, etc. For these types of organisms, individuals can experience density- or frequency-dependence in three demographic phases: fecundity (pre-dispersal), settlement (post-dispersal), or survival (adults). Let $u_i(t)$ denote the population density or frequency in patch $i$ at time $t$, where $t\in [0, \infty)$ in continuous time and $t=0,1,...$ in discrete time. Adults living in patch $i$ produce offspring at a rate $f_i(u_i)$ and experience mortality at a rate $m_i(u_i)$. A fraction $d_{ji}$ of offspring disperse from patch $i$ to patch $j$ and only fraction $s_j(u_j)$ of these offspring survive upon arriving in patch $j$.  \red{If}  there are $n$ patches, then the governing equation for $u_i$ is given by
\begin{equation}\label{model-single}
\Delta u_i=s_i(u_i) \sum_{j=1}^n d_{ij} f_j(u_j) u_j
-m_i(u_i) u_i,
\quad 1\le i\le n,
\end{equation}
where $\Delta u_i =\frac{du_i}{dt}$  in continuous time and $\Delta u_i = u'_i-u_i$ in discrete time;
$u'_i$ denotes the population density in patch $i$ in the next time step,
i.e. $u'_i(t)=u_i(t+1)$. Denote $\bR_{+}=[0, \infty)$. {Let $\S=\bR_+$ for the population density models and $\S=[0,1]$ for the population frequency models. Then $\S^n$ is the state space for the models.}  We make the following assumptions.

\medskip

\noindent{\bf(A1)} Matrix $(d_{ij})$ is non-negative, column stochastic, and irreducible in the continuous time case and primitive in the discrete time case. \red{A square matrix is irreducible if it is not similar via a permutation to a block upper triangular matrix.
 A primitive matrix is a square nonnegative matrix some power of which is positive}. Biologically, no individuals are lost while dispersing and after enough generations, the descendants of an adult from patch $i$ can be found in all patches. 		

\medskip
\noindent{\bf(A2)} $f_i, s_i: \S \to \bR_{+}$ are continuous, positive, 
\red{non-increasing} functions, and $m_i: \S\to \bR_{+}$ is continuous, positive, \red{non-decreasing}. Biologically, reproduction and survival \red{rates} decrease with population density, while \red{mortality} increases with density.

\medskip
\noindent{\bf (A3)}  $u_if_i(u_i): \S_{+}\to \bR_{+}$ is strictly increasing. Biologically, as the population gets larger in patch $i$, the more offspring are produced by the population in patch $i$. In the discrete-time case, we require the stronger hypothesis that
\[
\frac{\partial}{\partial u_i}\left(s_i(u_i) d_{ii}f_i(u_i)+(1-m_i(u_i))u_i\right) >0 \mbox{ for } u_i \in \S, i=1,2,\dots, n.
\]
This stronger hypothesis is needed to ensure monotonicity of the discrete time population dynamics.
\medskip

For each $i$ and population density $u_i$, define
\begin{equation}\label{eq:g-i}
g_i(u_i):=\frac{s_i(u_i) f_i(u_i)}{m_i(u_i)}.
\end{equation}
If the population density were held constant at $u_i$, then $g_i(u_i)$  equals the mean number of surviving offspring produced by an individual remaining in patch $i$ during its life time. Namely, $g_i(u_i)$ is the reproductive number of individuals living in patch $i$ with the fixed local density $u_i$. Hence, we view $g_i(u_i)$ as the fitness of an individual remaining in patch $i$. By (A2), $g_i$ are continuous, positive, decreasing functions. We make the following stronger  assumption on $g_i$:

\medskip
\noindent{\bf(A4)} $g_i$ is strictly decreasing on $\S$ and $\lim_{u_i\to\infty} g_i(u_i)<1$ for the population density models and $g_i(1)<1$ for the population frequency models. Biologically, fitness within a patch decreases with density and, at high enough densities, individual fitness is less than one i.e. individuals don't replace themselves.
\medskip

\noindent{} The term $g_i(0)$ determines whether individuals remaining in patch $i$ replace themselves during their lifetime under low-density conditions. If $g_i(0)>1$, individuals do replace themselves and patch $i$ is called a \emph{source patch}. In the absence of dispersal, source patches are able to sustain a persisting population. If $g_i(0)\le 1$, individuals at best just replace themselves and patch $i$ is called \emph{sink patch}. In the absence of dispersal, sink populations go asymptotically extinct.

We describe two classes of models in the literature which satisfy our assumptions.

\subsubsection{The Mouquet-Loreau model} A particular example of the population frequency models is due to  \citet{Mouquet2002,mouquet-loreau-03}. They introduced a model of metapopulations incorporating spatial structure at two scales: the within-patch scale and the between-patch scale. At the within-patch scale, the environment consists of a collection of identical sites, each of which can be occupied at most by one individual. For this population frequency model, the fraction of sites occupied in patch $i$ is $u_i$ and $\S=[0,1]$. Individuals within patch $i$ produce offspring at rate $b_i$ of which a fraction $d_{ji}$ disperse to patch $j$. Offspring arriving in site $i$ only survive if they colonize an empty site. Assuming each offspring colonizes sites at random, the probability of colonizing an empty site in patch $i$ is $1-u_i$. Offspring attempting to colonize an occupied site die.  Adults in site $i$ experience a per-capita mortality rate of $m_i$. Thus, the dynamics in patch $i$ are
\begin{equation}\label{eq:1}
\frac{d u_i}{dt}
=(1-u_i)  \sum_{j=1}^n d_{ij} b_j u_j -m_i u_i \quad i=1,2,\dots,n.
\end{equation}
In terms of \eqref{model-single}, we have $s_i(u_i)=1-u_i$, $f_i(u_i)=b_i $, $m_i(u_i)=m_i$, and $g_i(u_i)=b_i (1-u_i)/m_i$. For this model, a patch is a source if $b_i>m_i$, and a sink otherwise.

\subsubsection{The Kirkland et al. model} A particular example of the population density models is due to \citet{KLS}. They described a metapopulation model of semelparous populations (e.g. monocarpic plants, most insect species, certain fish species) which compete for \red{limited}  resources, reproduce, and  disperse as offspring. For these discrete-time models, all reproducing adults die and, consequently, $m_i(u_i)=1$. The number of surviving offspring produced per individual in patch $i$ is given by a decreasing function $f_i(u_i)$ satisfying \textbf{A3} and $\lim_{u_i\to\infty} f_i(u_i)<1$. A standard choice of $f_i$ is the \citet{beverton-holt-57} model $f_i(u_i)=\lambda_i/(1+a_iu_i)$ where $\lambda_i$ is the number of surviving offspring produced by an individual at low population densities, and $a_i$ measures the strength of intraspecific competition. As all offspring are able to colonize a patch, this model assumes $s_i(u_i)=1$ and, consequently, $g_i(u_i)=f_i(u_i)$. Source patches in this model satisfy $\lambda_i>1$, while $\lambda_i\le 1$ for sink patches.

\subsection{Global stability, ideal-free distributions, and source-sink landscapes}

If assumptions (A1) and (A2) hold, any non-zero equilibrium
of \eqref{model-single}  must be
positive in each component. Therefore, under these assumptions, \eqref{model-single} only has two types of equilibria: the zero equilibrium $u^*=(0,0,\dots,0)$ and positive equilibria $u^*=(u_1^*,u_2^*,\dots,u_n^*)$ where $u_i^*>0$ for all $i$.
\red{See Lemma \ref{lemma:single2-1} below}.

To determine whether the populations persist or not, we linearize \eqref{model-single} at the zero equilibrium:
\begin{equation}\label{linearize}
\Delta u = L u \mbox{ where } L=(L_{ij}) \mbox{ with } L_{ij} = s_i(0)d_{ij}f_j(0)-m_i(0)\red{\delta_{ij}}.
\end{equation}
Let $\rho(L)$ denote the \emph{stability modulus} of the matrix $L$ i.e. the largest of the real parts of the eigenvalues of $L$. The following lemma implies that  persistence or extinction of the population in \eqref{model-single} is determined by the sign of $\rho(L)$.

\begin{lemma}\label{lemma:single2-1} Suppose that
{\rm (A1)-(A4)} hold and let $L$ be given by \eqref{linearize}. If $\rho(L)>0$,  then \eqref{model-single} has a unique positive equilibrium, which
is globally asymptotically stable among
non-negative and not identically zero initial data {in $\S^n$}. If $\rho(L)\le 0$, then the zero equilibrium $u^*=0$ is globally stable among all non-negative initial data {in $\S^n$}.
\end{lemma}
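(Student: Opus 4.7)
The plan is to realize \eqref{model-single} as a strongly order-preserving, strictly sub-homogeneous semiflow on the ordered cone $\S^n$, so that the standard monotone dynamical systems dichotomy applies: the sign of $\rho(L)$ then decides between global attraction to $0$ and global attraction to a unique positive equilibrium. Concretely, I would verify (i) every nontrivial equilibrium is strictly positive, (ii) the semiflow (continuous time) or update map $T$ (discrete time) is monotone and irreducible, (iii) strict sub-homogeneity, and (iv) forward boundedness of orbits, and then invoke the sub-homogeneous monotone systems machinery.

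For (i), at a patch $i$ with $u_i^\ast=0$ the equilibrium equation $s_i(0)\sum_j d_{ij}f_j(u_j^\ast)u_j^\ast=0$ forces $d_{ij}u_j^\ast=0$ for every $j$ (since $s_i(0)>0$ and each $f_j>0$), and irreducibility of $(d_{ij})$ then propagates the zero to all patches. For (ii), a direct Jacobian calculation gives the off-diagonal entry $s_i(u_i)d_{ij}(f_j(u_j)+u_jf_j'(u_j))\geq 0$ by (A3); in discrete time the additional hypothesis of (A3) secures non-negativity of the diagonal entries of $DT$, so $T$ is monotone, and primitivity of $(d_{ij})$ yields a strongly monotone iterate.

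For (iii), direct expansion of $T(\alpha u)_i-\alpha T(u)_i$ for $\alpha\in(0,1)$ yields
\[
\alpha u_i\bigl[m_i(u_i)-m_i(\alpha u_i)\bigr]+\alpha\sum_j d_{ij}u_j\bigl[s_i(\alpha u_i)f_j(\alpha u_j)-s_i(u_i)f_j(u_j)\bigr],
\]
both non-negative because $m_i$ is non-decreasing and $s_i,f_j$ are non-increasing; strict monotonicity of $g_i$ then upgrades this to a strict inequality in at least one coordinate when $u\gg 0$, and the analogous identity handles the continuous-time vector field. For (iv) the frequency case is automatic since $\S=[0,1]$, while in the density case (A4) yields a dissipative estimate on a suitable weighted sum of the $u_i$ so that every orbit enters a compact absorbing set in finite time.

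With (i)--(iv) established the conclusion follows from the sub-homogeneous monotone dynamical systems theory (see e.g.\ Zhao's monograph, or Hirsch-Smith): if $\rho(L)\leq 0$ the zero equilibrium attracts all orbits, while if $\rho(L)>0$ there is a unique positive equilibrium attracting every nontrivial orbit in $\S^n$. The main obstacle will be the borderline case $\rho(L)=0$, where linearization alone is inconclusive; here strict sub-homogeneity is decisive, because any orbit remaining bounded away from zero would, via a Krein-Rutman type comparison against the linearization, force $\rho(L)>0$ and contradict the hypothesis.
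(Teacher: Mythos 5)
Your proposal follows essentially the same route as the paper: the authors also reduce the lemma to (i) strong monotonicity and sublinearity of the semiflow/update map under (A1)--(A3), (ii) determination of the stability of $0$ by $\rho(L)$, and (iii) boundedness of orbits via (A4), and then invoke standard monotone dynamical systems theory (citing \citet{Smith} for continuous time and Theorem 2.1 of \citet{KLS} for discrete time). Your write-up simply fills in more of the verification details (positivity of nontrivial equilibria, the Jacobian sign pattern, the sub-homogeneity identity, and the borderline case $\rho(L)=0$) that the paper leaves to the cited references.
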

\noindent{}The proof of Lemma \ref{lemma:single2-1} for discrete time is similar to that of Theorem 2.1
in \cite{KLS}.
The proof of Lemma \ref{lemma:single2-1} for continuous time
 is based upon the following facts: (i) system \eqref{model-single}
is strongly monotone and sublinear [(A1)-(A3)];
(ii) stability of the zero equilibrium (0, ..., 0) of \eqref{model-single}
is determined by $\rho(L)$; (iii) all positive solution
trajectories of system \eqref{model-single} are bounded [(A1), (A2), (A4)].
We omit the proofs which follow from standard monotone dynamical systems theory~\citep{Smith}.

Whenever there is a positive equilibrium for \eqref{model-single}, the following theorem proves the existence of an ecological dichotomy. \\

\begin{thm}~\label{source-sink} Suppose that {\rm (A1)-(A4)} hold and $u^*$ is a positive equilibrium for \eqref{model-single}. Then either (i)  $g_i(u_i^*)=1$ for all $i$, or (ii)  there exist $i$ and $j$ such that $g_i(u_i^*)<1$ and $g_j(u_j^*)>1$.
\end{thm}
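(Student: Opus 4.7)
The plan is to recast the equilibrium system as an eigenvalue relation for a nonnegative matrix and then extract a single scalar identity by summing against the left Perron eigenvector of the dispersal matrix $D$. This strategy should work uniformly for both the continuous-time and discrete-time versions of \eqref{model-single}, since in either case the equilibrium condition $\Delta u_i^*=0$ reduces to
\[
s_i(u_i^*)\sum_{j=1}^n d_{ij}f_j(u_j^*)u_j^* \;=\; m_i(u_i^*)u_i^*, \qquad 1\le i\le n.
\]

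First, I would multiply both sides of this identity by $f_i(u_i^*)/m_i(u_i^*)$ and introduce the auxiliary variable $v_j:=f_j(u_j^*)u_j^*>0$, so that the equilibrium relation becomes the compact eigenvector equation
\[
v_i \;=\; g_i(u_i^*)\sum_{j=1}^n d_{ij} v_j, \qquad 1\le i\le n,
\]
i.e.\ $GD\,v = v$ where $G=\operatorname{diag}(g_1(u_1^*),\dots,g_n(u_n^*))$. Since (A1)--(A2) guarantee $D$ is irreducible (resp.\ primitive) and $g_i(u_i^*)>0$, the matrix $GD$ is irreducible nonnegative with a strictly positive eigenvector $v$, so Perron--Frobenius pins down the spectral radius of $GD$ at $1$ (though for the ensuing calculation I only need the eigenvector identity itself).

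The key trick is to exploit that $D$ is column stochastic, so $\mathbf{1}^\top D=\mathbf{1}^\top$. Dividing the eigenvector equation by $g_i(u_i^*)$ and summing over $i$ gives
\[
\sum_{i=1}^n \frac{v_i}{g_i(u_i^*)} \;=\; \sum_{i=1}^n\sum_{j=1}^n d_{ij}v_j \;=\; \sum_{j=1}^n v_j,
\]
which rearranges to the crucial identity
\[
\sum_{i=1}^n \frac{v_i}{g_i(u_i^*)}\bigl(1-g_i(u_i^*)\bigr) \;=\; 0.
\]
Because $v_i/g_i(u_i^*)>0$ for every $i$, this weighted sum of the quantities $1-g_i(u_i^*)$ vanishes only if either all of them are zero (yielding case (i), $g_i(u_i^*)=1$ for all $i$) or there exist both strictly positive and strictly negative summands (yielding case (ii), with some $g_i(u_i^*)<1$ and some $g_j(u_j^*)>1$). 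This is precisely the dichotomy claimed.

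There is no serious analytical obstacle; the proof is essentially a one-line duality argument once the equilibrium is rewritten as $v=GDv$. The only subtlety worth stating carefully is the justification that the equilibrium identity has the same form in discrete and continuous time (in the discrete case, $u_i^*=s_i\sum_j d_{ij}f_j u_j^*+(1-m_i)u_i^*$ collapses to the same relation after cancelling $(1-m_i)u_i^*$), and the invocation of column stochasticity of $D$ from (A1), which is what converts the vector eigenvalue equation into the scalar identity that forces the trichotomy-collapsing-to-dichotomy.
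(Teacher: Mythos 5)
Your proof is correct, and it takes a genuinely more elementary route than the paper's. The paper first invokes the reduction inequality of Lemma~\ref{reduction} (itself built on the operator-norm Lemma~\ref{lemma:upper}) to get $\max_i\{s_i(u_i^*)f_i(u_i^*)-m_i(u_i^*)\}=\rho(S\Lambda-M)\ge\rho(SD\Lambda-M)=0$, with equality precisely when $S\Lambda-M$ is scalar; the scalar case gives (i), and in the non-scalar case the strict inequality produces a patch with $g_j(u_j^*)>1$, after which the paper sums $\Delta u_i/s_i(u_i^*)$ over $i$ to force a compensating patch with $g_i(u_i^*)<1$. Your observation is that this last summation identity already carries the whole theorem: writing $v_i=f_i(u_i^*)u_i^*>0$ and pairing the equilibrium relation $v_i=g_i(u_i^*)\sum_j d_{ij}v_j$ against the left Perron vector $\mathbf{1}$ of the column-stochastic $D$ yields
\[
\sum_{i=1}^n \frac{v_i}{g_i(u_i^*)}\bigl(1-g_i(u_i^*)\bigr)=0,
\]
which is exactly the paper's identity $\sum_i\bigl(f_i(u_i^*)-m_i(u_i^*)/s_i(u_i^*)\bigr)u_i^*=0$ in disguise, and a zero weighted sum with strictly positive weights forces either all summands to vanish or summands of both signs to appear. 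This makes the reduction lemma unnecessary for Theorem~\ref{source-sink}; your argument needs only column stochasticity, positivity of $u^*$, and positivity of $s_i,f_i,m_i$, and does not even use irreducibility or the Perron--Frobenius theory you mention in passing. What the paper's heavier machinery buys is not this theorem but its reuse elsewhere: Lemma~\ref{reduction} and its rigidity statement (equality iff $S\Lambda-M$ is scalar) are the engine behind the stability analyses in Theorems~\ref{main:1} and~\ref{main:2}, and the spectral formulation makes the ``dispersal reduces growth'' interpretation explicit. Your handling of the discrete-versus-continuous time issue (cancelling $u_i^*$ so both cases reduce to the same equilibrium relation) is also correct.
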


In this dichotomy, either the fitnesses of individuals equal one in all patches, or there are patches where the fitness is greater than one, and other patches where it is less than one. In the first case, individuals can not increase their fitness by moving to any other patch and exhibit an \emph{ideal free distribution}~\citep{Fretwell}. In the second case, fitness is greater than one in some patches and less than one in other patches. As $g_i$ are decreasing functions, the patches where $g_i(u_i^*)>1$ are sources. The patches where $g_i(u_i^*)<1$ can be either sinks or sources. In the latter case, they are known as ``pseudo-sinks'' as they appear to be sinks, but populations in these patches can persist in the absence of immigration~\citep{watkinson-sutherland-95}. Pseudo-sinks arise when immigration into a source patch increases the equilibrium population density beyond the patch's carrying capacity. As the fitness functions are decreasing, only the second option is possible whenever there are ``true'' sinks in the landscape (i.e. $g_i(0)<1$ for some $i$).

To prove Theorem~\ref{source-sink} we need the following \blue{two} lemmas which also \blue{are} useful for our later theorems. \blue{First, we need a property of the column sum norm of a matrix: $\| A\|_\infty=\mbox{max}\{\sum_i |A_{ij}|: 1\le j \le n\}$. As $\|A\|_\infty=\max_{\|u\|_\infty=1}\|u^T A\|_\infty$ (where $u^T$ denotes the transpose of $u$), $\|A\|_\infty$ corresponds to an operator norm of $A$ with respect to the $\ell_\infty$ norm $\|u\|_\infty=\max_i|u_i|$.}

\blue{
\begin{lemma}\label{lemma:upper}
Let $A$ be a non-negative, irreducible matrix. Then 
\[
\rho(A)\le \| A\|_\infty
\]
where equality implies that all of the column sums of $A$ equal $\rho(A)$.
\end{lemma}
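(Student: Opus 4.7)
The plan is to exploit the Perron–Frobenius theorem for nonnegative irreducible matrices, which yields a strictly positive eigenvector associated with $\rho(A)$, and then represent $\rho(A)$ as a weighted average of the column sums of $A$.

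First I would invoke Perron–Frobenius: since $A$ is nonnegative and irreducible, $\rho(A)$ is an eigenvalue admitting a right eigenvector $w=(w_1,\dots,w_n)^T$ with all $w_j>0$, i.e.\ $Aw=\rho(A)w$. Writing this componentwise and summing over $i$,
\[
\sum_{i}\sum_{j}A_{ij}w_j \;=\; \rho(A)\sum_{i}w_i,
\]
and swapping the order of summation on the left gives
\[
\sum_{j}c_j\,w_j \;=\; \rho(A)\sum_{j}w_j,
\qquad\text{where } c_j:=\sum_{i}A_{ij}.
\]
Since all $w_j>0$, this exhibits $\rho(A)$ as the weighted average
\[
\rho(A)=\frac{\sum_{j}c_j w_j}{\sum_{j}w_j}
\]
of the column sums $c_j$, with positive weights $w_j$.

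A weighted average of real numbers is bounded above by their maximum, so $\rho(A)\le\max_j c_j=\|A\|_\infty$, which is the desired inequality. For the equality case, if $\rho(A)=\|A\|_\infty=\max_j c_j$, then the weighted average equals the maximum, which is possible only if $c_j=\max_k c_k=\rho(A)$ for every index $j$ with $w_j>0$. Since the Perron eigenvector is strictly positive, this must hold for all $j$, so every column sum of $A$ equals $\rho(A)$.

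The only real ingredient is the existence of a strictly positive right Perron eigenvector from the irreducibility hypothesis; once that is in hand the argument reduces to the trivial observation that a weighted average with positive weights is at most the maximum and equals it only when every entry (above zero weight) matches the maximum. I do not anticipate any substantive obstacle; the lemma is essentially the left-eigenvector side of the classical Collatz–Wielandt bound packaged for column sums.
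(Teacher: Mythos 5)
Your proof is correct, but it takes a genuinely different route from the paper's. The paper works with the \emph{left} Perron eigenvector $u$, normalized so that $\max_i u_i = u_j = 1$, and reads off the inequality from the single column $j$: $\rho(A) = \sum_i A_{ij}u_i \le \sum_i A_{ij} \le \|A\|_\infty$. The cost of that choice shows up in the equality case: to conclude that all column sums coincide, the paper must first prove that $u$ is the all-ones vector, which it does by passing to the primitive matrix $B=A+I$, taking a power $B^n$ with strictly positive entries, and deriving a contradiction from $\rho(B^n)=\|B^n\|_\infty$ if some $u_i<1$. You instead use the \emph{right} Perron eigenvector $w>0$ and sum the eigen-equation over all rows, exhibiting $\rho(A)$ as a weighted average $\sum_j c_j w_j / \sum_j w_j$ of the column sums $c_j$ with strictly positive weights. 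This buys you both conclusions at once: the inequality is the trivial bound of a weighted average by its maximum, and equality forces $\sum_j w_j(\max_k c_k - c_j)=0$ with nonnegative summands and positive weights, hence $c_j=\rho(A)$ for every $j$, with no need for the primitivity-and-powers detour. Your argument is shorter and arguably cleaner; the only quibble is your closing remark calling it the ``left-eigenvector side'' of Collatz--Wielandt, when you in fact used the right eigenvector to average the column sums.
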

}

\blue{\begin{proof} Let $u$ be a left eigenvector of $A$ associated with the eigenvalue $\rho(A)$. By the Perron-Frobenius Theorem, $u$ can be choosen to have strictly positive entries with $\|u\|_\infty=1$. Let $j$ be such that $u_j=1$. Then 
\begin{equation}\label{eq:upper}
\rho(A)= \sum_i A_{ij} u_i \le \sum_i A_{ij} \le \|A\|_\infty
\end{equation}
which gives the first assertion of the lemma.
Now suppose that $\rho(A)=\|A\|_\infty$. Define $B=A+I$ where $I$ denotes the identity matrix. As $B$ is primitive, there exists $n\ge 1$ such that $B^n$ has strictly positive entries. Then $\rho(B)=\rho(A)+1=\|A\|_\infty+1=\|B\|_\infty$ and
\[
\rho(B^n)=\rho(B)^n = (\|B\|_\infty)^n \ge \|B^n\|_\infty \ge \rho(B^n)
\]
where the first inequality follows from $\|\cdot\|_\infty$ being an operator norm and the second inequality follows from  \eqref{eq:upper} applied to $B^n$. Hence, $\rho(B^n)=\|B^n\|_\infty$. Since $B^n$ has strictly positive entries, if $u_i <1$ for some $i$, then 
\[
\rho(B^n)= \sum_i (B^n)_{ij} u_i < \sum_i (B^n)_{ij} \le \|B^n\|_\infty
\]
 contradicting that $\rho(B^n)=\|B^n\|_{\infty}$. Therefore, $u=(1,1,\dots,1)$. As $uA=\rho(A)u$, all of the column sums of $A$ equal $\rho(A)$.
\end{proof}}

\blue{We use Lemma~\ref{lemma:upper} to prove the following crucial inequality. }

\begin{lemma}~\label{reduction}
Let $D$ be a column stochastic, non-negative, irreducible matrix and $S,\Lambda, M$ be diagonal matrices with positive diagonal entries. Then
\[
\rho(S D \Lambda -M) \le \rho(S\Lambda-M)
\]
with equality if and only if $S\Lambda-M$ is a scalar matrix, 
\red{i.e. a diagonal matrix with all its main diagonal entries equal}.
\end{lemma}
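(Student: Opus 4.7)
The plan is to prove the inequality by a similarity transformation that reduces $SD\Lambda-M$ to a matrix whose column sums are exactly the diagonal entries of $S\Lambda-M$, and then invoke Lemma~\ref{lemma:upper}. Specifically, since $S$ is diagonal with positive entries it is invertible, and conjugation preserves the spectrum, so
\[
\rho(SD\Lambda-M)=\rho\bigl(S^{-1}(SD\Lambda-M)S\bigr)=\rho(D\Lambda S - M),
\]
using that the diagonal matrices $S,M$ commute. Setting $C:=D\Lambda S-M$, I now work with $C$ in place of $SD\Lambda-M$.

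Next, I choose a constant $c>0$ large enough that $B:=C+cI$ is entrywise non-negative (e.g.\ $c>\max_i m_i$ suffices, since $D\Lambda S$ is already non-negative). Because $D$ is irreducible and $\Lambda,S$ are positive diagonal, the off-diagonal pattern of $B$ coincides with that of $D$, so $B$ is irreducible. Hence Lemma~\ref{lemma:upper} applies and yields $\rho(B)\le\|B\|_\infty$.

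The key computation is the column sums of $B$. The $(i,j)$-entry of $D\Lambda S$ equals $d_{ij}\lambda_j s_j$, so the $j$-th column sum of $B$ is
\[
\Bigl(\sum_i d_{ij}\Bigr)\lambda_j s_j - m_j + c = \lambda_j s_j - m_j + c,
\]
using column stochasticity $\sum_i d_{ij}=1$. Therefore $\|B\|_\infty=\max_j(\lambda_j s_j-m_j)+c=\rho(S\Lambda-M)+c$, where the last equality holds because $S\Lambda-M$ is diagonal. Combining with $\rho(B)=\rho(C)+c=\rho(SD\Lambda-M)+c$ and subtracting $c$ gives the desired inequality.

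For the equality case, if $\rho(SD\Lambda-M)=\rho(S\Lambda-M)$ then equality holds in Lemma~\ref{lemma:upper}, which forces all column sums of $B$ to coincide, i.e.\ $\lambda_j s_j-m_j$ is independent of $j$, so $S\Lambda-M$ is a scalar matrix. Conversely, if $S\Lambda-M=\alpha I$ then every column sum of $B$ equals $\alpha+c$; the constant left vector $(1,\dots,1)$ is then a left eigenvector of $B$ with eigenvalue $\alpha+c$, and since $B$ is non-negative and irreducible Perron--Frobenius identifies this with $\rho(B)$, so $\rho(SD\Lambda-M)=\alpha=\rho(S\Lambda-M)$. I expect no serious obstacle: the only subtle point is spotting the right similarity transformation (conjugating by $S^{-1}$ rather than, say, $\Lambda^{-1}$) so that the resulting column sums collapse cleanly under column stochasticity.
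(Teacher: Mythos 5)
Your proof is correct and takes essentially the same route as the paper's: conjugate by $S^{-1}$ to get $D\Lambda S-M$, shift by a multiple of the identity to obtain a non-negative irreducible matrix, and apply Lemma~\ref{lemma:upper} together with the column stochasticity of $D$, which makes the column sums collapse to the diagonal entries of $S\Lambda-M$ (plus the shift). The only difference is cosmetic: the paper routes the column-sum computation through an auxiliary column-stochastic matrix $C$ satisfying $C(A+B)=DA+B$, whereas you compute the column sums of the shifted matrix directly; both hinge on exactly the same equality-case clause of Lemma~\ref{lemma:upper}.
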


If $D$ corresponds to a dispersal matrix, and $\Lambda$, $S$, and $M$ have diagonal elements $s_i(u_i)$, $f_i(u_i)$, and  $m_i(u_i)$, then this lemma implies that dispersal decreases the ``metapopulation'' growth rate $\rho(SD\Lambda -M)$.  This is a type of reduction phenomena  that is similar to those described by \citet{altenberg-12}.

\begin{proof}  Consider
\[
S^{-1} \left( S D \Lambda - M + \rho(M) I\right) S= D \underbrace{\Lambda S}_{=A}+ \underbrace{\rho(M) I - M}_{=B}
\]
whose stability modulus equals $\rho(S D \Lambda -M)+\rho(M)$. By definition, $A$ is a diagonal matrix with strictly positive diagonal entries $a_i$ and $B$ is a diagonal matrix with non-negative entries $b_i$.

Define the matrix $C$  by $C_{ij} = D_{ij} \frac{a_j}{a_j+b_j}$ for $i\neq j$ and $C_{ii}=1-\sum_j C_{ji}$. $C$ is a column stochastic, irreducible, non-negative matrix. By definition of $C$, we have $C(A+B)=DA+B$.
\blue{Lemma~\ref{lemma:upper} implies that
\[
\rho(C (A+B)) \le \| C(A+B)\|_\infty  =\rho(A+B)
\]
with a strict inequality whenever $A+B$ is non-scalar (as then the column sums of $C(A+B)$ are not all equal).} Hence,
\[
\rho(SD\Lambda -M) = \rho(C (A+B)) - \rho(M)  \le \rho(A+B) - \rho(M)  = \rho(\Lambda S -M)
\]
with a strict inequality whenever $A+B$ is non-scalar.
\end{proof}

\noindent{\emph{ Proof of Theorem \ref{source-sink}.}} Let $u^*$ be a positive equilibrium. Lemma~\ref{reduction} with $D=(d_{ij})$, $S$ with diagonal elements $s_i(u_i^*)$, $\Lambda$ with diagonal elements $f_i(u_i^*)$, and $M$ with diagonal elements $m_i(u_i^*)$ implies
\begin{eqnarray}\label{useful}
\nonumber\max_i \left\{ s_i(u_i^*)f_i(u_i^*)-m_i(u_i^*)\right\}&=& \rho (S\Lambda -M)\\
&\ge& \rho(SD\Lambda-M) \\
\nonumber&=&0
\end{eqnarray}
where the last line follows from $u^*$ being a positive equilibrium, and where the inequality in the second line is an equality if and only if $S\Lambda-M$ is a scalar matrix.

Suppose $S\Lambda-M$ is a scalar matrix. Then \eqref{useful} implies $s_i(\red{u_i^*})f_i(u_i^*)-m_i(u_i^*)=0$ for all $i$. Equivalently, $g_i(u_i^*)=1$ for all $i$.

Suppose $S\Lambda-M$ is not a scalar matrix. Then inequality \eqref{useful} is strict and there exists $j$ such that $s_j(u_j^*)f_j(u_j^*)-m_j(u_j^*)>0$. Equivalently, $g_j(u_j^*)>1$. Furthermore, as $u^*$ is an equilibrium,
\begin{eqnarray*}
0&=&\sum_i \Delta u_i/s_i(u_i)\Big|_{u=u^*}\\
&=& \sum_{i,\ell} f_\ell(u_\ell^*)d_{i\ell} u_\ell^* - \sum_i m_i(u_i^*)u_i^*/s_i(u_i^*)\\
&=& \sum_i \left( f_i(u_i^*)-m_i(u_i^*)/s_i(u_i^*)\right)u_i^*.
\end{eqnarray*}
As $f_j(u_j^*)-m_j(u_j^*)/s_j(u_j^*)$ in this final sum is positive, there must be an $i$ such that
$f_i(u_i^*)-m_j(u_i^*)/s_i(u_i^*)<0$  i.e. $g_i(u_i^*)<1$.
\qed

\section{Evolution of ideal-free distributions}

To study the evolution of dispersal, we consider two populations that
have the same population dynamics and  differ only in their dispersal strategies.
Let $u_i$ and $v_i$ be the densities of two competing species in patch $i$.
The governing equations for $u_i$ and $v_i$ are
\begin{equation}\label{model-2species}
\aligned
&\Delta u_i=s_i(u_i+v_i) \sum_{j=1}^n d_{ij} f_j(u_j+v_j) u_j
-m_i(u_i+v_i) u_i,
\\
&\Delta v_i=s_i(u_i+v_i) \sum_{j=1}^n D_{ij} f_j(u_j+v_j) v_j
-m_i(u_i+v_i) v_i,
\endaligned
\end{equation}
where the matrix $(D_{ij})_{ij}$ is a non-negative, column stochastic matrix. {The state space for \eqref{model-2species} is $\S^{2n}$.}

We are interested in identifying classes of dispersal matrices $(d_{ij})$ such that populations with this dispersal strategy displace populations with a different dispersal strategy. Consistent with classical ecological theory, Theorem~\ref{source-sink} suggests a solution. Namely, assuming $(d_{ij})$ is irreducible, let $u^*$ be a positive equilibrium of the single species model \eqref{model-single}. There are two possibilities: there are patches with $g_i(u_i^*)>1$, or there is an ideal free distribution with $g_i(u_i^*)=1$ for all $i$. Intuitively, for the first option, another population where individuals prefer going to the source patch (i.e. $d_{ij}\approx 1$ for all $j$ where $i$ is such that $g_i(u_i^*)>1$) can invade the $(u^*,0)$ equilibrium. Hence, the only candidate\red{s} for the ``unbeatable'' strategy are those which lead to an ideal-free distribution.

As equilibrium ideal-free distributions with an irreducible $(d_{ij})$ only occur if all patches are sources, we begin by considering this case. Then we  consider the case in which there are sink patches.

\subsection{Landscapes of source patches}
Assume there are only source patches i.e. $g_i(0)>1$ for all $i$. For these landscapes, for each $i$, there is a unique $u_i^*>0$, the \emph{carrying capacity} of patch $i$,  such that $g_i(u_i^*)=1$. Let  $u^*=(u_1^*,...,u^*_n)$ which corresponds to an ideal free distribution. To understand which dispersal matrices yield an ideal free distribution in the single species model \eqref{model-single}, we  have the following lemma.

\begin{lemma} Assume $g_i(0)>1$ for all $i$. If $g_i(u_i^*)=1$ for all $i$, then $u^*$ is an equilibrium of \eqref{model-single}
if and only if
\begin{equation}\label{eq:IFD}
\sum_{j=1}^n d_{ij} f_j(u_j^*)u_j^*=f_i(u_i^*) u_i^*
\end{equation}
for all $i$.
\end{lemma}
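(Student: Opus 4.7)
The plan is to do a direct algebraic manipulation of the equilibrium equation, exploiting the identity $s_i(u_i^*)f_i(u_i^*)=m_i(u_i^*)$ coming from $g_i(u_i^*)=1$. Since the assumption $g_i(0)>1$ guarantees that the carrying capacities $u_i^*$ with $g_i(u_i^*)=1$ actually exist and are strictly positive (by strict monotonicity of $g_i$ in (A4) and continuity together with $\lim_{u_i\to\infty}g_i(u_i)<1$), we may work entirely within the strictly positive regime where $s_i(u_i^*)>0$, so division by $s_i(u_i^*)$ is legitimate.

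The concrete steps I would carry out are as follows. First, I would write down the condition that $u^*$ is an equilibrium of \eqref{model-single}, namely
\[
s_i(u_i^*)\sum_{j=1}^n d_{ij} f_j(u_j^*) u_j^* \;=\; m_i(u_i^*)\, u_i^*, \qquad 1\le i\le n,
\]
noting that in the continuous-time case this comes from setting $\Delta u_i=0$ and in the discrete-time case from setting $u_i'=u_i$. Second, I would substitute $m_i(u_i^*)=s_i(u_i^*)f_i(u_i^*)$, which follows from the hypothesis $g_i(u_i^*)=s_i(u_i^*)f_i(u_i^*)/m_i(u_i^*)=1$. This converts the right-hand side into $s_i(u_i^*) f_i(u_i^*)\, u_i^*$. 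Third, since $s_i(u_i^*)>0$ by (A2), I would divide through by $s_i(u_i^*)$ to obtain exactly \eqref{eq:IFD}. The converse direction is immediate: starting from \eqref{eq:IFD}, multiply both sides by $s_i(u_i^*)$ and use $s_i(u_i^*)f_i(u_i^*)=m_i(u_i^*)$ to recover the equilibrium identity.

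There is essentially no obstacle here, as the lemma is a transparent restatement of the equilibrium condition under the ideal-free hypothesis. The only subtlety worth spelling out is the well-definedness of $u_i^*$ as a carrying capacity: under (A2) and (A4), $g_i$ is continuous and strictly decreasing from $g_i(0)>1$ down to a value strictly less than $1$, so the equation $g_i(u_i^*)=1$ has a unique positive solution. With that noted, the proof is a two-line calculation in each direction.
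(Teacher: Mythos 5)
Your proof is correct and is essentially the same as the paper's: both are a direct algebraic rewriting of the equilibrium condition $s_i(u_i^*)\sum_j d_{ij}f_j(u_j^*)u_j^*=m_i(u_i^*)u_i^*$ using the identity $s_i(u_i^*)f_i(u_i^*)=m_i(u_i^*)$ forced by $g_i(u_i^*)=1$, with positivity of $s_i$ justifying the division. The paper phrases the manipulation as multiplying by $f_i/m_i$ to insert the factor $g_i(u_i^*)$ while you substitute for $m_i$ and cancel $s_i$, but these are the same two-line calculation.
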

\begin{proof}
$u^*$ is an equilibrium of \eqref{model-single} if and only if
$g_i(u_i^*)\sum_{j=1}^n d_{ij} f_j(u_j^*)u_j^*=f_i(u_i^*) u_i^*$
for every $i$, which is equivalent to \eqref{eq:IFD} since
$g_i(u_i^*)=1$ for every $i$.
\end{proof}

As the matrix $(d_{ij})$ is column stochastic, equation \eqref{eq:IFD} implies the flux of individuals coming into a patch equal the flux of individuals leaving a patch:
\[
\sum_{j=1}^n d_{ij} f_j(u_j^*)u_j^*=\sum_{j=1}^n d_{ji} f_i(u_i^*) u_i^*.
\]
Hence, an ideal free distribution implies that there is no net loss or gain
of individuals from dispersal, so there are no costs or
benefits arising directly from dispersal. \citet{McPeek} call this ``balanced dispersal''. 	

\begin{defn}  
We say that $(d_{ij})$ is an ideal free dispersal strategy
with respect to $u^*$ if \eqref{eq:IFD} holds
for every $i$, and that  $(d_{ij})$ is not an ideal free dispersal strategy
if \eqref{eq:IFD} fails for some $i$.
\end{defn}

There exists
 a close connection between
an ideal free dispersal strategy and a \emph{line-sum-symmetric matrix}:
a square matrix $A$ such that
 the sum of the elements in the i-th row of $A$ equals the sum of
the elements in the i-th column of $A$ for every $i$.

\begin{lemma}\label{lemma:2-1} If matrix $(d_{ij})$ is an ideal free dispersal strategy with
respect to $u^*$, then matrix $(d_{ij} f_j(u_j^*)u_j^*)$ is line-sum-symmetric.
\end{lemma}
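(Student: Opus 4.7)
The plan is to verify the definition of line-sum-symmetry directly by computing the $i$-th row sum and the $i$-th column sum of the matrix $A = (A_{ij})$ with $A_{ij} = d_{ij} f_j(u_j^*) u_j^*$ and checking they agree for each $i$.

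First I would write down the $i$-th row sum: $\sum_{j=1}^n A_{ij} = \sum_{j=1}^n d_{ij} f_j(u_j^*) u_j^*$. By the ideal-free hypothesis \eqref{eq:IFD}, this immediately collapses to $f_i(u_i^*) u_i^*$.

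Next I would compute the $i$-th column sum: $\sum_{j=1}^n A_{ji} = \sum_{j=1}^n d_{ji} f_i(u_i^*) u_i^*$. Since $f_i(u_i^*) u_i^*$ does not depend on the summation index $j$, it factors out, giving $f_i(u_i^*) u_i^* \sum_{j=1}^n d_{ji}$. Now I would invoke assumption (A1), which states that $(d_{ij})$ is column stochastic, to conclude $\sum_{j=1}^n d_{ji} = 1$, so the $i$-th column sum also equals $f_i(u_i^*) u_i^*$.

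The two agree for every $i$, which is exactly the defining property of a line-sum-symmetric matrix. There is no real obstacle here: the lemma is a one-line bookkeeping consequence of the ideal-free identity together with column stochasticity, and no further analytic input is required.
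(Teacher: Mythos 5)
Your proof is correct and is essentially identical to the paper's: both compute the $i$-th row sum via the ideal-free identity \eqref{eq:IFD} and the $i$-th column sum via column stochasticity of $(d_{ij})$, and observe both equal $f_i(u_i^*)u_i^*$. Nothing further is needed.
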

\begin{proof}
For each $i$,
$$
\sum_{j} d_{ij} f_j(u_j^*)u_j^*=f_i(u_i^*) u_i^*
=\left(\sum_{j} d_{ji}\right) f_i(u_i^*)u_i^*
=\sum_{j} \left(d_{ji}f_i(u_i^*)u_i^* \right),
$$
i.e., matrix $(d_{ij} f_j(u_j^*)u_j^*)$ is line-sum-symmetric.
\end{proof}

The following classification result for
line-sum-symmetric matrices is given in \cite{eaves-etal-85} and
plays important role in the proof of our first main result.
\begin{thm}\label{thm:ls}
Let $A=(a_{ij})$ be an $n\times n$ nonnegative matrix. Then $A$ is line-sum-symmetric if
and only if
\begin{equation}\label{eq:linesymmetric-1}
\sum_{i, j=1}^n a_{ij} \frac{x_i}{x_j}\ge \sum_{i, j=1}^n a_{ij}
\end{equation}
for all $x_i>0$, $1\le i\le n$. Moreover, if $A$ is irreducible and line-sum-symmetric, equality in
{\rm(\ref{eq:linesymmetric-1})}
holds if and only if all the coordinates of $x=(x_1, ..., x_n)$ coincide, i.e., $x_i=x_j$ for any
$1\le i, j\le n$.
\end{thm}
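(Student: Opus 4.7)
The plan is to prove both directions using the weighted arithmetic-geometric mean (AM-GM) inequality, together with a first-order optimality argument for the converse.

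First I would tackle the implication that line-sum-symmetry implies \eqref{eq:linesymmetric-1}. Set $s=\sum_{i,j} a_{ij}$; the case $s=0$ is trivial since $A$ is then the zero matrix. For $s>0$, apply the weighted AM-GM inequality with weights $a_{ij}/s$ to the positive quantities $x_i/x_j$:
\[
\frac{1}{s}\sum_{i,j} a_{ij}\,\frac{x_i}{x_j}\;\ge\;\prod_{i,j}\left(\frac{x_i}{x_j}\right)^{a_{ij}/s}\;=\;\prod_{k=1}^n x_k^{(r_k-c_k)/s},
\]
where $r_k=\sum_j a_{kj}$ and $c_k=\sum_j a_{jk}$ are the $k$-th row and column sums of $A$. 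Line-sum-symmetry forces $r_k=c_k$ for every $k$, so the right-hand product equals $1$, and multiplying through by $s$ gives \eqref{eq:linesymmetric-1}.

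For the converse, I would define $F(x)=\sum_{i,j}a_{ij}x_i/x_j$ on $(0,\infty)^n$ and note that the hypothesis $F(x)\ge F(1,\dots,1)$ makes the point $x=(1,\dots,1)$ an interior minimum of $F$. A direct calculation gives
\[
\frac{\partial F}{\partial x_k}\Big|_{x=(1,\dots,1)} \;=\; \sum_{j}a_{kj}-\sum_{i}a_{ik}\;=\;r_k-c_k,
\]
and the first-order condition $\partial F/\partial x_k=0$ at the minimum yields line-sum-symmetry.

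For the equality statement under irreducibility, I would combine the AM-GM equality criterion with a connectivity argument. Equality in the weighted AM-GM step requires all ratios $x_i/x_j$ with $a_{ij}>0$ to coincide with a common value $c>0$; plugging this into the assumed equality in \eqref{eq:linesymmetric-1} and using line-sum-symmetry forces $cs=s$, so $c=1$, and hence $x_i=x_j$ whenever $a_{ij}>0$. Irreducibility means the directed graph on $\{1,\dots,n\}$ with arcs $\{(i,j):a_{ij}>0\}$ is strongly connected, and propagating these equalities along directed paths gives $x_1=\cdots=x_n$. The main obstacle I anticipate is this equality case: one must carefully separate the equality condition coming from AM-GM from the one postulated in \eqref{eq:linesymmetric-1}, verify that together they pin $c$ down to $1$ rather than some other positive constant, and then invoke strong connectivity at the correct moment. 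The two implications themselves are relatively mechanical applications of standard inequalities and calculus.
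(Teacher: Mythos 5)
Your proof is correct. Be aware, though, that there is no in-paper proof to compare it against: the paper quotes Theorem \ref{thm:ls} verbatim from \citet{eaves-etal-85} and uses it as a black box in the proofs of Lemmas \ref{lemma:c-1} and \ref{lemma:c-2}. What you have written is therefore a self-contained replacement for that citation, and it holds together. The forward implication via a single weighted AM--GM with weights $a_{ij}/s$ is clean: the geometric mean collapses to $\prod_k x_k^{(r_k-c_k)/s}=1$ precisely because of line-sum-symmetry, and the AM--GM equality condition hands you $x_i/x_j=c$ on the support of $A$, after which $cs=s$ forces $c=1$ (you should note explicitly that irreducibility gives $s>0$ when $n\ge 2$, the case $n=1$ being vacuous) and strong connectivity of the support digraph propagates $x_i=x_j$ to all pairs. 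The converse via the vanishing gradient of $F$ at the interior global minimizer $(1,\dots,1)$ is equally sound, since $\partial F/\partial x_k$ at that point is exactly $r_k-c_k$. This route is arguably slicker than the argument commonly used for this result, which first decomposes a line-sum-symmetric nonnegative matrix into a nonnegative combination of circuit matrices and then applies AM--GM to each circuit $\frac{x_{i_1}}{x_{i_2}}\cdots\frac{x_{i_m}}{x_{i_1}}=1$ separately; your global application avoids the decomposition lemma entirely. The only cosmetic omission is the trivial reverse half of the equality statement (equal coordinates obviously give equality in \eqref{eq:linesymmetric-1}), which you should record for completeness.
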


Our main result for sink-free landscapes is that ideal-free dispersal strategies always outcompete non-ideal-free strategies. This result provides a positive answer to Conjecture 5.1 of \citet{KLS}  for the Kirkland et al. model with $f_i(u_i)=\lambda_i/(1+a_i u_i)$. The proof is given in section~\ref{proof:main:1}.

\begin{thm}\label{main:1}
 Suppose that {\rm(A1)-(A4)} hold, the matrix $(D_{ij})$ satisfies {\rm(A1)}, and $g_i(0)>1$ for all $i$. If  $(d_{ij})$ is an ideal free dispersal strategy with respect to $u^*$
 but $(D_{ij})$ is not, then $(u^*, 0)$ is  globally asymptotically stable among
non-negative and not identically zero initial data for \eqref{model-2species}.
\end{thm}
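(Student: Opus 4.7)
The plan is to apply the theory of strongly monotone competitive dynamical systems. Under (A1)--(A3), system \eqref{model-2species} is competitive and generates a strongly monotone semiflow on $\S^{2n}$; by the classical competitive exclusion theorem (\citep{Smith}), to conclude that $(u^*,0)$ is globally attracting among initial data with $u_0\not\equiv 0$ it suffices to verify (a) \eqref{model-2species} admits no coexistence equilibrium, and (b) if the $v$-only equilibrium $(0,v^*)$ exists, its linearization is unstable in the $u$-direction. Both items rest on the same strategy: apply Theorem \ref{thm:ls} to the matrix $A=(d_{ij}f_j(u_j^*)u_j^*)$ (line-sum-symmetric by Lemma \ref{lemma:2-1}) with a tailored test vector, then combine the resulting inequality with a mass-balance identity obtained by summing the equilibrium equations.

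For (a), let $(\tilde u,\tilde v)$ be a coexistence equilibrium and set $\tilde w_i=\tilde u_i+\tilde v_i$, $\tilde U_j=f_j(\tilde w_j)\tilde u_j$. The $u$-equation rearranges to $Gd\,\tilde U=\tilde U$ with $G=\diag{g_i(\tilde w_i)}$. Applying Theorem \ref{thm:ls} with $x_i=f_i(u_i^*)u_i^*/\tilde U_i$ yields
\begin{equation*}
\sum_i f_i(u_i^*)u_i^*\,\frac{1-g_i(\tilde w_i)}{g_i(\tilde w_i)} \;\ge\; 0,
\end{equation*}
while summing the $u$- and $v$-equations at equilibrium (after dividing by $s_i(\tilde w_i)$) and using column stochasticity of $d$ and $D$ gives
\begin{equation*}
\sum_i \tilde w_i f_i(\tilde w_i)\,\frac{1-g_i(\tilde w_i)}{g_i(\tilde w_i)} \;=\; 0.
\end{equation*}
Since $g_i$ is strictly decreasing with $g_i(u_i^*)=1$, the factor $(1-g_i(\tilde w_i))/g_i(\tilde w_i)$ has the same sign as $\tilde w_i-u_i^*$; by (A3), so does $\tilde w_if_i(\tilde w_i)-u_i^*f_i(u_i^*)$. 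The termwise difference of the two displays is therefore nonnegative with strict inequality unless $\tilde w_i=u_i^*$, and chaining the estimate and identity forces $\tilde w\equiv u^*$. Then $\tilde U$ lies in the $1$-dimensional Perron eigenspace of $d$, giving $\tilde u=cu^*$ for some $c\in(0,1]$; analogously $\tilde V$ is a positive eigenvector of $D$ at eigenvalue $1$. The constraint $\tilde v=(1-c)u^*$ then forces $D(F(u^*)u^*)=F(u^*)u^*$, i.e.\ $D$ is ideal free, unless $c=1$; since $D$ is not ideal free by hypothesis, $c=1$ and $\tilde v\equiv 0$, contradicting coexistence.

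For (b), suppose $v^*$ exists and, for contradiction, $\rho(L_u)\le 0$ where $L_u=S(v^*)d\Lambda(v^*)-M(v^*)$. Perron--Frobenius for essentially nonnegative irreducible matrices yields $P>0$ with $L_uP\le 0$; setting $P'_j=f_j(v_j^*)P_j$ this becomes $\sum_j d_{ij}P'_j\le P'_i/g_i(v_i^*)$. Applying Theorem \ref{thm:ls} with $x_i=f_i(u_i^*)u_i^*/P'_i$ produces
\begin{equation*}
\sum_i f_i(u_i^*)u_i^*\,\frac{1-g_i(v_i^*)}{g_i(v_i^*)} \;\ge\; 0,
\end{equation*}
while summing the equilibrium equation for $v^*$ divided by $s_i(v_i^*)$ gives
\begin{equation*}
\sum_i v_i^*f_i(v_i^*)\,\frac{1-g_i(v_i^*)}{g_i(v_i^*)} \;=\; 0.
\end{equation*}
Because $D$ is not ideal free at $u^*$ we have $v^*\neq u^*$, so some $g_i(v_i^*)\neq 1$; the sign comparison used in (a) renders these two displays strictly incompatible, giving $\rho(L_u)>0$.

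The main obstacle is the sign bookkeeping in both (a) and (b): pairing the Theorem \ref{thm:ls} inequality against the equilibrium mass-balance identity so that (A3) together with strict monotonicity of $g_i$ forces $1-g_i$ to vanish pointwise. The marginal linear stability at $(u^*,0)$ (one has $\rho(L_v)=0$ by Lemma \ref{reduction}, since $S(u^*)\Lambda(u^*)-M(u^*)=0$ is the zero scalar matrix) is sidestepped by invoking competitive exclusion rather than direct linearization.
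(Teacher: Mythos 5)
Your proposal is correct and follows essentially the same route as the paper: it rules out coexistence equilibria and destabilizes $(0,v^*)$ by pairing the line-sum-symmetry inequality of Theorem \ref{thm:ls} (applied to $(d_{ij}f_j(u_j^*)u_j^*)$ with the test vector $x_i=f_i(u_i^*)u_i^*/[f_i(\cdot)u_i]$) against the mass-balance identity obtained from column stochasticity, exactly as in Lemmas \ref{lemma:c-1} and \ref{lemma:c-2}, before invoking the monotone/competitive exclusion theorem. The only cosmetic differences are that you identify $\tilde u$ as a multiple of $u^*$ via the Perron eigenspace of $d$ rather than the equality case of Theorem \ref{thm:ls}, and you run the instability argument by contradiction from $\rho(L_u)\le 0$ rather than bounding $\lambda^*$ directly; both are equivalent to the paper's steps.
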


Theorem~\ref{main:1} suggests that evolution selects for these ideal-free strategies. But what happens when different ideal-free strategies compete? The following result implies there are neutral dynamics with neither strategy outcompeting the other. Hence, one would expect that genetic drift, which is not included in these deterministic models, to play an important role. { To state the theorem, recall the $\omega$-limit set, $\omega(K)$, of a compact set  $K\subset \S^{2n}$ is given by the closure of $\cap_{T\ge 1} \cup_{t\ge T} \{ (u(t),v(t)): (u(0),v(0))\in K\}$. A compact set  $A\subset \S^{2n}$ is an \emph{attractor} if there exists a compact neighborhood $K$ such that $\omega(K)=A$. The \emph{basin of attraction} of $A$ is the set of points $(u,v)\in \S^{2n}$ such that $\omega(\{(u,v)\})\subset A$. }

\begin{thm}\label{main:1B}
 Suppose that {\rm(A1)-(A4)} hold, the matrix $(D_{ij})$ satisfies {\rm(A1)}, and $g_i(0)>1$ for all $i$. If  $(d_{ij})$ and $(D_{ij})$ are ideal free dispersal strategies \red{with respect to $u^*$,
 then} the set of non-negative equilibria consists of $(0,0)$ and the line of equilibria given by
 \[
 \mathcal{E}=\left\{(\alpha u^*,(1-\alpha)u^*): 0\le \alpha \le 1\right\}.
 \]
{Furthermore, if $f_i$, $s_i$, $m_i$ are continuously differentiable and satisfy} { $f_i'(x)+s_i'(x)-m_i'(x)<0$} {for all $x\in \bR_+$ and $1\le i\le n$, and, in addition, for the discrete-time case \[\frac{\partial}{\partial x} \left( s_i(x+y)d_{ii}f_i(x+y)x+(1-m_i(x+y))x\right)>0\] for all $x,y\in \bR_+$ and $1\le i\le n$, then $\mathcal {E}$ is an attractor and the $\omega$-limit set for every point in its basin of attraction consists of a single point in $\mathcal{E}$. }
\end{thm}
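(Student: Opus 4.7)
The plan is to establish the equilibrium set first and then the attractor/convergence statement. The equilibrium analysis relies on Lemma~\ref{reduction} and, in a key step, on Theorem~\ref{thm:ls}; the attractor and single-point convergence statement is essentially a normally hyperbolic invariant manifold argument at each point of $\mathcal{E}$.

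That each $(\alpha u^*,(1-\alpha)u^*)\in\mathcal{E}$ is an equilibrium follows by direct substitution: $u+v=u^*$ reduces all rate functions to their values at $u^*$, and the identity $g_i(u_i^*)=1$ together with the ideal-free condition \eqref{eq:IFD} for both $(d_{ij})$ and $(D_{ij})$ makes both equations of \eqref{model-2species} hold. Conversely, let $(u,v)$ be any nonzero non-negative equilibrium. Irreducibility of $(d_{ij})$ and $(D_{ij})$ (applied to each equation as in the proof of Lemma~\ref{lemma:single2-1}) forces each of $u,v$ to be either identically zero or strictly positive. The cases in which one component vanishes reduce to the single-species problem, and Lemma~\ref{lemma:single2-1} gives the endpoints $(u^*,0)$ and $(0,u^*)$ of $\mathcal{E}$. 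The substantive case is $u,v>0$; set $w=u+v$ and $\tilde d_{ij}=d_{ij}u_j/w_j+D_{ij}v_j/w_j$, which is column stochastic, non-negative, and irreducible (its digraph contains that of $(d_{ij})$), and observe that adding the two equilibrium equations shows $w$ is a positive equilibrium of \eqref{model-single} with dispersal $\tilde D$. With $S,F,M$ the diagonal matrices with entries $s_i(w_i)$, $f_i(w_i)$, $m_i(w_i)$, $u>0$ is the positive Perron eigenvector of the quasi-positive irreducible matrix $SDF-M$ for eigenvalue $0$; hence $\rho(SDF-M)=0$ and Lemma~\ref{reduction} yields $0\le\rho(SF-M)$ with equality if and only if $SF-M$ is scalar. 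In the equality case the common scalar value must be $0$, so $g_i(w_i)=1$ for every $i$, and strict monotonicity of $g_i$ gives $w=u^*$; the $u$-equation then reduces to $DF^*u=F^*u$, and since the one-dimensional $1$-eigenspace of the irreducible column-stochastic matrix $D$ is spanned by $F^*u^*$, we deduce $u=\lambda u^*$ and similarly $v=\mu u^*$, with $\lambda+\mu=1$, placing $(u,v)\in\mathcal{E}$.

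The main obstacle is to exclude the strict case $\rho(SF-M)>0$. To rule this out I would change variables to $\xi_i=u_if_i(w_i)/(u_i^*f_i^*)$ and $\xi'_i=v_if_i(w_i)/(u_i^*f_i^*)$, which rewrites the equilibrium equations as $(R\xi)_i=\xi_i/g_i(w_i)$ and $(R'\xi')_i=\xi'_i/g_i(w_i)$ with the row-stochastic matrices $R_{ij}=d_{ij}f_j^*u_j^*/(f_i^*u_i^*)$ and $R'_{ij}=D_{ij}f_j^*u_j^*/(f_i^*u_i^*)$, and then apply Theorem~\ref{thm:ls} (via Lemma~\ref{lemma:2-1}) to the line-sum-symmetric matrices $(d_{ij}f_j^*u_j^*)$ and $(D_{ij}f_j^*u_j^*)$ with weights $\xi$ and $\xi'$. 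Each application yields the inequality $\sum_i u_i^*f_i^*(1/g_i(w_i)-1)\ge 0$, with equality iff $\xi$ (respectively $\xi'$) is constant in $i$; combining these with the two sum identities $\sum_i u_if_i(w_i)(1-1/g_i(w_i))=0$ and $\sum_i v_if_i(w_i)(1-1/g_i(w_i))=0$ (obtained by summing each equilibrium equation against $1/s_i(w_i)$) is intended to force the equality case and hence $w=u^*$. This is the delicate step, as both eigenvector conditions for $D$ and $D'$ at the same $w$ must be used simultaneously to reconcile the line-sum-symmetry inequality with the putative strict-case scenario.

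For the attractor and single-point convergence statement, I would linearize \eqref{model-2species} at a generic $(\alpha u^*,(1-\alpha)u^*)\in\mathcal{E}$. The tangent direction $(u^*,-u^*)$ is annihilated by the Jacobian, giving a simple $0$ eigenvalue. Splitting perturbations into the total component $z=(u-\alpha u^*)+(v-(1-\alpha)u^*)$ and a transverse component $\delta=u-\alpha u^*$, the $\delta$-dynamics on the slow manifold $\{z=0\}$ is governed by $S^*DF^*-M^*$, which has $0$ as its simple Perron eigenvalue (eigenvector $u^*$) with all other eigenvalues strictly negative by irreducibility; the $z$-dynamics is driven by the averaged ideal-free dispersal $\alpha D+(1-\alpha)D'$ plus first-derivative corrections, and the hypothesis $f_i'+s_i'-m_i'<0$ (together with the additional discrete-time monotonicity) is exactly what is needed to make the $z$-block strictly stable. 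Hence $\mathcal{E}$ is a normally hyperbolic one-dimensional invariant manifold, and the standard invariant-manifold theory combined with the competitive monotone structure of \eqref{model-2species} yields both the attractor property and the convergence of every trajectory in the basin of attraction to a unique point of $\mathcal{E}$.
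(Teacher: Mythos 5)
Your classification of the equilibria follows essentially the paper's route and does work; the step you flag as ``delicate'' closes in one line. From the line-sum-symmetry inequality applied to the $u$-equation alone you get $\sum_i u_i^* f_i(u_i^*)\left[g_i^{-1}(w_i)-1\right]\ge 0$, and adding your two summation identities gives $\sum_i w_i f_i(w_i)\left[g_i^{-1}(w_i)-1\right]= 0$; subtracting yields
\[
\sum_i \left[w_i f_i(w_i)-u_i^* f_i(u_i^*)\right]\left[g_i^{-1}(w_i)-g_i^{-1}(u_i^*)\right]\le 0,
\]
and since $x\mapsto xf_i(x)$ is strictly increasing by (A3) and $g_i^{-1}$ is strictly increasing by (A4), every summand is non-negative, hence zero, forcing $w=u^*$. (You do not need the second application of Theorem~\ref{thm:ls} to the $v$-equation for this; the paper carries out exactly this computation inside the proof of Lemma~\ref{lemma:c-1} and simply invokes it.) Your alternative finish via the one-dimensional $1$-eigenspace of the column-stochastic $D$ is also fine; note too that your reduction-lemma detour is not needed, since the line-sum-symmetry argument handles the scalar and non-scalar cases uniformly.

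The genuine gap is in the normal-hyperbolicity argument. Your splitting into $z=\delta u+\delta v$ and $\delta=\delta u$ does not block-triangularize the Jacobian at a point $(\alpha u^*,(1-\alpha)u^*)$: on $\{z=0\}$ one has $\Delta z = S^*(D-D')F^*\delta u$, which is nonzero whenever $(d_{ij})\neq(D_{ij})$, so $\{z=0\}$ is not invariant and the spectrum of the full $2n\times 2n$ Jacobian is not the union of the spectra of your two putative blocks. Reading off stability from ``the $\delta$-dynamics governed by $S^*DF^*-M^*$'' and ``the $z$-block made stable by $f_i'+s_i'-m_i'<0$'' therefore does not establish that all eigenvalues other than the neutral one lie strictly inside the unit circle. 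The paper avoids this by applying the Perron--Frobenius theorem to the \emph{full} Jacobian with respect to the competitive cone $\{(u,v):u\ge 0,\ v\le 0\}$: the sign hypotheses (including the extra differentiability conditions in the statement) make the Jacobian primitive for that cone, the eigenvector $(u^*,-u^*)$ with eigenvalue $1$ lies in the cone's interior, so $1$ is the simple dominant eigenvalue and everything else is strictly smaller in modulus. You also omit the endpoints $(u^*,0)$ and $(0,u^*)$ of $\mathcal{E}$, where the Jacobian genuinely is block-triangular and a separate comparison argument is needed to show the dominant eigenvalue of the $\partial_v G$ block is strictly less than one; only after all of this does the Hirsch--Pugh--Shub theorem deliver the attractor property and the single-point convergence along stable fibers.
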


Theorem~\ref{main:1B} is a generalization of \citet[Proposition 5.3]{KLS}. {The proof is given in section~\ref{proof:main:1B}.}
 {The additional assumptions to ensure $\mathcal{E}$ is an attractor correspond to a strengthening of our standing assumptions \textbf{(A2)} and \textbf{(A3)}.}

\subsection{Source-sink landscapes} Now, let us consider landscapes with sink as well as source patches. Specifically, assume for some $1\le k<n$,  $g_i(0)\le 1$ for $k+1\le i\le n$, and $g_i(0)>1$ for $1\le i\le k$. For the source patches $1\le i\le k$, there exist unique $u_i^*>0$ such that $g_i(u_i^*)=1$. Then $u^*=(u_1^*,u_2^*,\dots, u_k^*,0,\dots,0)$ corresponds to an ideal-free distribution as individuals (which only exist in the source patches) can not increase their fitness by moving to any other patch.  As $u_i^*=0$ for some $i$, the only way this distribution can be realized as an equilibrium of \eqref{model-single} is if $(d_{ij})$ is reducible. For example, if $(d_{ij})$ is the identity matrix i.e. the populations are sedentary, then $u^*$ is a globally stable equilibrium of \eqref{model-single} for non-negative initial data whose first $k$ coordinates are positive.

For a sedentary population competing with a dispersing population, the governing dynamics are given by
\begin{equation}\label{model-2species-v2}
\aligned
&\Delta u_i=s_i(u_i+v_i) f_i(u_i+v_i) u_i
-m_i(u_i+v_i) u_i,
\\
&\Delta v_i=s_i(u_i+v_i) \sum_{j=1}^n D_{ij} f_j(u_j+v_j) v_j
-m_i(u_i+v_i) v_i.
\endaligned
\end{equation}

The following theorem shows that these sedentary populations always outcompete populations with an irreducible dispersal matrix. Intuitively, irreducibility of  $(D_{ij})$ implies there is dispersal into sink habitats, resulting in a loss of individuals.

\begin{thm}\label{main:2}
Suppose that {\rm(A2)-(A4)} hold, the matrix $(D_{ij})$ satisfies {\rm(A1)}, and there exists a $k<n$ such that $g_i(0)>1$ if and only if $1\le i\le k$ {and $g_i(0)<1$ for some $i>k$}. Let $u_i^*>0$ be the unique solution to $g_i(u_i)=1$ for $1\le i\le k$ and $u_i^*=0$ for $i>k$. Then $(u^*, 0)$ is  globally asymptotically stable among positive initial data for \eqref{model-2species-v2}.
\end{thm}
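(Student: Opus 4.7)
The plan is to exploit the monotone-competitive structure of \eqref{model-2species-v2} --- it is strongly monotone with respect to the competitive order in continuous time, and monotone in discrete time once the strengthened \textbf{(A3)} holds. For such two-population systems, global convergence to $(u^*,0)$ on the positive orthant reduces to (i) local asymptotic stability of $(u^*,0)$, (ii) instability of any $(0,v^*)$ that happens to exist, and (iii) absence of any coexistence equilibrium. Boundedness of positive orbits follows from the standing hypotheses via the Perron--Frobenius estimates already used in Lemma~\ref{lemma:single2-1}.

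For (i), I would linearise the $v$-equation at $(u^*,0)$. Writing $S_* = \diag{s_i(u_i^*)}$, $F_* = \diag{f_i(u_i^*)}$, $M_* = \diag{m_i(u_i^*)}$ with $u_i^*=0$ for $i>k$, the Jacobian is $S_* D F_* - M_*$. The diagonal of $S_*F_* - M_*$ equals $m_i(u_i^*)(g_i(u_i^*)-1)$: this vanishes for $1\le i\le k$ and is non-positive for $i>k$, with strict negativity at any true sink $(g_i(0)<1)$. Hence $S_*F_* - M_*$ is diagonal but \emph{not} scalar, and Lemma~\ref{reduction} yields the strict inequality $\rho(S_* D F_* - M_*) < \rho(S_*F_* - M_*) = 0$. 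The $u$-linearisation decouples patchwise and each scalar piece is locally stable at $u_i^*$ (source) or $0$ (sink), so $(u^*,0)$ is locally asymptotically stable. For (ii), if $(0,v^*)$ exists then $v^*>0$; applying Theorem~\ref{source-sink} to the $v$-only system rules out the ideal-free case, since at a true sink $g_i(v_i^*)\le g_i(0)<1$, so there must be some $j$ with $g_j(v_j^*)>1$. The diagonal $u$-Jacobian $\diag{m_i(v_i^*)(g_i(v_i^*)-1)}$ then has a positive eigenvalue, so $(0,v^*)$ is unstable.

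For (iii), suppose $(\tilde u, \tilde v)$ is a non-negative equilibrium distinct from $(0,0)$, $(u^*,0)$, and $(0,v^*)$. Then $\tilde u \ne 0$ and $\tilde v \ne 0$; irreducibility of $D$ applied to the $v$-equilibrium relation forces $\tilde v > 0$ componentwise, and Perron--Frobenius on $S(W) D F(W) - M(W)$ with $W=\tilde u+\tilde v$ gives $\rho(S(W) D F(W) - M(W)) = 0$. Lemma~\ref{reduction} together with the sink hypothesis --- running exactly the argument used in Case 2 of Theorem~\ref{source-sink} --- then produces a patch $i_*$ with $g_{i_*}(W_{i_*})>1$, which forces $i_*\le k$, and then $\tilde u_{i_*}(g_{i_*}(W_{i_*})-1)=0$ gives $\tilde u_{i_*}=0$. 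To parlay this structural information into a strict contradiction I would adapt the line-sum-symmetry machinery of Theorem~\ref{thm:ls} that underpins Theorem~\ref{main:1}: on the support $I=\{i:\tilde u_i>0\}$ we have $W_i=u_i^*$ and the sedentary dispersal is trivially ideal free, whereas the irreducible $(D_{ij})$ must transfer $\tilde v$-mass off $I$ and incur unavoidable loss at the sink patches, so applying Theorem~\ref{thm:ls} with weights built from $\tilde v_i f_i(W_i)$ to an appropriate restriction of $(D_{ij})$ should produce a strict inequality incompatible with the $v$-equilibrium identity $m_i(W_i)\tilde v_i = s_i(W_i)\sum_j D_{ij} f_j(W_j)\tilde v_j$. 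Once (i)--(iii) are in place, the monotone-competitive convergence theorem applied to the relatively compact positive orbits yields global convergence to $(u^*,0)$. The genuine obstacle, absent from Theorem~\ref{main:1}, is that $u^*$ has zero components on the sink patches, so the classical line-sum-symmetry argument cannot be applied globally and must instead be localised to the source subnetwork while carefully tracking the leakage of $\tilde v$-mass through $(D_{ij})$ into the sinks.
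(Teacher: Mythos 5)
Your steps (i) and (ii) are essentially the paper's: local stability of $(u^*,0)$ via Lemma~\ref{reduction} (non-scalarity coming from the true sink with $g_i(0)<1$), and instability of $(0,v^*)$ via the dichotomy of Theorem~\ref{source-sink} applied to the diagonal $u$-linearisation. The gap is in step (iii). You propose to prove that \emph{no} coexistence equilibrium exists by localising the line-sum-symmetry argument of Theorem~\ref{thm:ls} to the source subnetwork, but you only gesture at how the "leakage" through $(D_{ij})$ would yield a contradiction, and you yourself flag this as the genuine obstacle. The paper does not close the argument this way, and for good reason: its Lemma~\ref{char} explicitly \emph{admits} mixed equilibria $(\tilde u,\tilde v)$ with $\tilde u_i>0$ for some $i\le k$ and $\tilde v\gg 0$ (case (iv)); what it proves about them is only the structural fact that some source patch $\ell\le k$ must have $\tilde u_\ell=0$ together with $g_\ell(\tilde v_\ell)>1$. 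Since the sedentary $u$-dynamics are completely decoupled across patches, nothing forces $\tilde u$ to be positive on all source patches at a coexistence state, and the Eaves--Hoffman--Rothblum inequality has no obvious purchase on the sedentary (identity) dispersal restricted to a proper subset of patches. So your step (iii), as sketched, is not a proof and may be attempting to establish something stronger than what is true.

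The idea you are missing is how the paper converts the structural fact from case (iv) into global convergence without ever excluding such equilibria: it shows every equilibrium other than $(u^*,0)$ is linearly unstable, and then runs a direct $\omega$-limit-set argument. If $(\bar u,\bar v)$ with $\bar v\not\equiv 0$ were in $\omega((u_0,v_0))$ for componentwise positive initial data, then at the distinguished patch $\ell$ one has $\bar u_\ell=0$ but $s_\ell(\bar v_\ell)f_\ell(\bar v_\ell)-m_\ell(\bar v_\ell)>0$; along any sequence $t_n$ with $(u(t_n),v(t_n))\to(\bar u,\bar v)$ the per-capita growth rate of the sedentary population $u_\ell$ is eventually bounded below by a positive constant, so $u_\ell(t_n)\not\to 0$, a contradiction. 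The same argument then forces $\bar u=u^*$ among the $\bar v\equiv 0$ equilibria. This bespoke persistence argument also sidesteps a second weakness in your plan: the convergence theorem for monotone competitive systems that you invoke (as in the proof of Theorem~\ref{main:1}) needs strong monotonicity and a genuinely interior stable equilibrium, whereas here the $u$-block of the Jacobian is diagonal (so irreducibility of the full competitive system degenerates on the boundary) and $(u^*,0)$ itself has zero components on the sink patches. You should replace step (iii) and the final appeal to the abstract convergence theorem with the paper's trajectory-level argument.
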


A proof of Theorem~\ref{main:2} is given in section~\ref{proof:main:2}. We conjecture that there is a more general class of reducible dispersal strategies that lead to exclusion of all irreducible dispersal strategies. Specifically, if $(d_{ij})$ is a matrix which (i) restricted to patches $1\le i\le k$ is an ideal-free dispersal strategy, and (ii) $d_{ij}=0$ whenever $i>k$ or $j>k$.

\begin{conjecture}\label{main:2b}
Suppose that {\rm(A2)-(A4)} hold, the matrix $(d_{ij})$ is ideal-free in the above sense, and the matrix $(D_{ij})$ satisfies {\rm(A1)}. Let $u_i^*>0$ be the unique solution to $g_i(u_i)=1$ for $1\le i\le k$ and $u_i^*=0$ for $i>k$. Then $(u^*, 0)$ is  globally asymptotically stable among positive initial data for \eqref{model-2species}.
\end{conjecture}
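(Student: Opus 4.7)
The plan is to combine the competitive--exclusion template from Theorem~\ref{main:1} with the source--sink loss reasoning of Theorem~\ref{main:2}. Under \textbf{(A2)}--\textbf{(A4)}, assumption \textbf{(A1)} on $(D_{ij})$, and the block structure of $(d_{ij})$, system \eqref{model-2species} is a strongly monotone two-species competitive system with bounded orbits, so by the standard dichotomy for monotone competition global convergence of positive orbits to $(u^*,0)$ reduces to three tasks: (a) local asymptotic stability of $(u^*,0)$; (b) either nonexistence of the other semi-trivial equilibrium $(0,v^*)$, or its instability; and (c) nonexistence of interior coexistence equilibria.

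For (a), the $u$-subsystem decouples from $v$ and further decouples across the source/sink partition because $d_{ij}$ vanishes whenever $i>k$ or $j>k$: on sinks the $u$-dynamics reduce to $\Delta u_i=-m_i(u_i)u_i$ so $u_i\to 0$, and on sources they form the sink-free ideal-free system whose global stability follows from Lemma~\ref{lemma:single2-1} together with the mechanism behind Theorem~\ref{main:1}. The $v$-direction is the decisive point. Linearizing the $v$-equation at $(u^*,0)$ gives $L_v = S_{u^*}DF_{u^*}-M_{u^*}$ in the notation of Lemma~\ref{reduction}. That lemma yields
\[
\rho(L_v)\;\le\;\rho(S_{u^*}F_{u^*}-M_{u^*})\;=\;\max_i m_i(u_i^*)\bigl(g_i(u_i^*)-1\bigr),
\]
which is $0$ on sources and $\le 0$ on sinks. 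Assuming at least one patch $i>k$ is a true sink (so $g_i(0)<1$, as in Theorem~\ref{main:2}), the diagonal matrix $S_{u^*}F_{u^*}-M_{u^*}$ is non-scalar; combined with the irreducibility of $D$, the inequality in Lemma~\ref{reduction} becomes strict, so $\rho(L_v)<0$ and $(u^*,0)$ is linearly asymptotically stable.

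For (b), if $(0,v^*)$ exists then $v^*>0$ componentwise by Lemma~\ref{lemma:single2-1} and the irreducibility of $D$, and Theorem~\ref{source-sink} applied to $v^*$ forces some patch where $g_i(v_i^*)>1$. Because sinks satisfy $g_i(v_i^*)<g_i(0)\le 1$, that patch must be a source $i\le k$, so $v_i^*<u_i^*$ there, ensuring $v^{*(k)}\ne u^{*(k)}$. Linearizing the $u$-equation at $(0,v^*)$ gives $L_u = S_{v^*}dF_{v^*}-M_{v^*}$; by the vanishing of $d_{ij}$ outside the source block, $L_u$ is block-diagonal with a sink block $-\operatorname{diag}(m_i(v_i^*))_{i>k}$ contributing only negative eigenvalues, so $\rho(L_u)$ equals the stability modulus of the source block $L_u^{(k)} = S_{v^*}^{(k)} d^{(k)} F_{v^*}^{(k)} - M_{v^*}^{(k)}$. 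To show $\rho(L_u^{(k)})>0$, exploit that the matrix $A=\bigl(d_{ij}^{(k)} f_j(u_j^*)u_j^*\bigr)$ is line-sum-symmetric (Lemma~\ref{lemma:2-1}) and invoke Theorem~\ref{thm:ls}, via a Collatz--Wielandt style test vector of the form $x_j=u_j^* f_j(u_j^*)/f_j(v_j^*)$ combined with the equilibrium identity for $v^*$ (which fixes a weighted average of $g_i(v_i^*)-1$) to convert the mismatch $v^{*(k)}\ne u^{*(k)}$ into strict positivity of $\rho(L_u^{(k)})$.

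Step (c) should follow from a parallel line-sum-symmetric comparison at a hypothetical interior coexistence equilibrium $(\bar u,\bar v)$: summing suitably weighted versions of the equilibrium identities for $\bar u$, applying Theorem~\ref{thm:ls} to $A$, and using irreducibility of $D$ together with $\bar v>0$ on sink patches to extract a strict dispersal-loss term will deliver a contradiction. The main obstacle is step (b) (and its cousin in (c)): the line-sum-symmetric comparison that works cleanly in the sink-free setting of Theorem~\ref{main:1} must be carried out on the source block alone, while the $v$-density substituted into the linearization comes from an equilibrium of the full, coupled $D$-system. Choosing the weights in Theorem~\ref{thm:ls} so that the equality case detects the strict mismatch $u^{*(k)}\ne v^{*(k)}$, without being overwhelmed by the source--sink coupling carried by $v^*$, is the technical crux of the proof.
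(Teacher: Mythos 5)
First, a point of orientation: the statement you are proving is stated in the paper as Conjecture~\ref{main:2b} and is left \emph{open}; the paper proves only the special case $(d_{ij})=I$ (Theorem~\ref{main:2}) and the sink-free case (Theorem~\ref{main:1}). So there is no proof in the paper to compare against, and your proposal must be judged on its own. Your overall strategy --- local stability of $(u^*,0)$ via the strict case of Lemma~\ref{reduction}, instability of $(0,v^*)$ and nonexistence of coexistence equilibria via line-sum-symmetry of $(d_{ij}f_j(u_j^*)u_j^*)$ on the source block and Theorem~\ref{thm:ls} --- is the natural hybrid of the two proofs in the paper, and the computations you defer do in fact appear to close: because $u_i^*=0$ on sink patches, the sink terms drop out of the weighted sums over the source block, while the global identity $\sum_i f_i(v_i^*)v_i^*\bigl[1-g_i^{-1}(v_i^*)\bigr]=0$ (summed over all $n$ patches using column-stochasticity of $D$) contributes \emph{strictly positive} terms on true sinks where $v_i^*>0$. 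So the ``technical crux'' you flag in (b) and (c) is less of an obstacle than you fear, though you have not actually carried it out, and your test vector in (b) is missing the eigenvector: it should be $x_i=f_i(u_i^*)u_i^*/[f_i(v_i^*)\varphi_i]$ as in Lemma~\ref{lemma:c-2}, not $u_i^*f_i(u_i^*)/f_i(v_i^*)$.

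The genuine gap is at the very start: your claim that \eqref{model-2species} is a \emph{strongly} monotone competitive system, which is what lets you reduce global convergence to tasks (a)--(c) via the standard dichotomy (Theorem A of \citealp{Hsu}, as used for Theorem~\ref{main:1}). Strong monotonicity there rests on irreducibility of \emph{both} dispersal matrices, and here $(d_{ij})$ is reducible by construction ($d_{ij}=0$ for $i>k$ or $j>k$; indeed $d^{(k)}$ itself need not be irreducible --- the identity is an admissible ideal-free strategy). The system is monotone but not strongly so, and the trichotomy for planar-like monotone competition does not apply off the shelf. This is precisely why the paper's proof of the special case Theorem~\ref{main:2} does \emph{not} invoke that machinery and instead runs a bespoke argument: a classification of all nonnegative equilibria (Lemma~\ref{char}), linear instability of each one other than $(u^*,0)$, and then a direct $\omega$-limit-set/persistence argument showing $u_i(t_n)\not\to 0$ on source patches to rule out limit points with $\bar v\neq 0$. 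Your proof needs an analogue of that last step (or a justification that the monotone-systems theorem you want still applies to the reducible block structure); as written, establishing (a)--(c) does not by itself yield global asymptotic stability. A secondary instance of the same issue: you cite Lemma~\ref{lemma:single2-1} for the $u$-subsystem on the source block, but that lemma also assumes (A1), so it too must be applied block-by-block when $d^{(k)}$ is reducible.
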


\section{Conclusions}

{For populations dispersing prior to reproduction in a patchy landscape, we have found a fundamental dichotomy about their equilibrium state: either the populations have per-capita growth rates equal to zero in all occupied patches, or their per-capita growth rates are positive in some patches and negative in others. The first possibility corresponds to an ideal-free distribution in the sense of \citet{Fretwell} as 
\red{individual fitnesses}  are equal in all occupied patches. Under these ideal-free equilibrium conditions, the populations also exhibit ``balanced dispersal'' as the immigration rate 
 \red{into each patch} 
  is balanced by immigration out of the patch. The second possibility in the dichotomy corresponds to a landscape, in the sense of \citet{pulliam-88}, containing source and sink patches (where sink patches may be pseudo-sinks in the parlance of \citet{watkinson-sutherland-95}). In source patches, births exceed deaths and emigration exceeds immigration, while in sink patches the opposite occurs. While this dichotomy has been the focus of a series empirical papers~\citep{doncaster-etal-97,diffendorfer-98,tattersall-04}, our Theorem~\ref{source-sink} is, to the best of our knowledge, the first mathematical demonstration of this dichotomy. While our proof was for models with natal dispersal, the proof should apply to many more model types including those with breeding dispersal as well as natal dispersal.}

{Given this ecological dichotomy, one can ask which one is favored by natural selection acting on natal dispersal.} For patchy landscapes, we have shown that {natal} dispersal strategies leading to an ideal free distribution are evolutionarily stable.  If all patches can support local populations, we observe that the strategies leading to an ideal free distribution can be characterized in terms of the line sum symmetry of certain matrices  constructed from the dispersal matrices and the carrying capacities of the patches in the models.  In the case where some patches {can not support local populations} and the dispersal matrix is irreducible we observe that the only dispersal strategy that can produce an ideal free distribution is the strategy of no movement at all.  These results, together with similar results already known for {models of semelparous populations or populations that continually disperse  throughout their lifetime}, support the conclusion that in spatially varying but temporally constant environments the dispersal strategies that are evolutionarily stable are those that lead to an ideal free distribution of the population using them. An underlying biological reason for this is that such strategies allow populations to perfectly match the levels of resources across their environment.

{Our analysis extends the reduction principle, in which movement or mixing generally reduces growth~\citep{altenberg-12}, to populations with natal dispersal. However, beyond the ``natal versus breeding dispersal'' dichotomy~\citep{greenwood-harvey-82}, populations are structured by other states such as size, age, gender, or stages, and individuals in different states may have different dispersal propensities~\citep{harts-etal-16}. \citet{laa-06}'s analysis of discrete-time linear models demonstrates that the reduction principle may not hold whenever two or more stages are dispersing. More precisely, they proved whenever there are cycles in the population's dispersal graph that involve multiple stages, there exists a matrix model for which this form of movement increases the population growth rate. Conversely, when no such cycles exist, their analysis suggests that the reduction principle might hold. This raises an interesting future challenge: what are evolutionarily stable strategies for dispersal for stage-structured populations? In particular, we conjecture that if only one stage disperses, dispersal strategies of this stage leading to an ideal-free distribution are evolutionarily stable. }

\medskip

\noindent{\bf Acknowledgements.}
We would like to thank the editor and two anonymous referees for their helpful comments which improve the manuscript.
This research was partially supported by the NSF grants DMS-0816068, DMS-1118623, and DMS-1514752 (RSC, CC),
DMS-1411476 (YL), DMS-1022639 and DMS-1313418 (SJS)
and by National Natural Science Foundation of China grants No. 11571363 and No. 11571364 (YL).

  \bibliography{CCLS}

\begin{thebibliography}{37}
\providecommand{\natexlab}[1]{#1}
\providecommand{\url}[1]{\texttt{#1}}
\expandafter\ifx\csname urlstyle\endcsname\relax
  \providecommand{\doi}[1]{doi: #1}\else
  \providecommand{\doi}{doi: \begingroup \urlstyle{rm}\Url}\fi

\bibitem[Altenberg(1984)]{altenberg-84}
L.~Altenberg.
\newblock \emph{A generalization of theory on the evolution of modifier genes}.
\newblock PhD thesis, Stanford University, 1984.

\bibitem[Altenberg(2012)]{altenberg-12}
L.~Altenberg.
\newblock Resolvent positive linear operators exhibit the reduction phenomenon.
\newblock \emph{Proceedings of the National Academy of Sciences}, 109:\penalty0
  3705--3710, 2012.

\bibitem[Averill et~al.(2012)Averill, Lou, and Munther]{averill-etal-12}
I.~Averill, Y.~Lou, and D.~Munther.
\newblock On several conjectures from evolution of dispersal.
\newblock \emph{Journal of Biological Dynamics}, 6:\penalty0 117--130, 2012.

\bibitem[Beverton and Holt(1957)]{beverton-holt-57}
R.J.H. Beverton and S.J. Holt.
\newblock \emph{On the dynamics of exploited fish populations}, volume~2 of
  \emph{Fisheries Investigation Series}.
\newblock Ministry of Agriculture, Fisheries and Food, London, UK, 1957.

\bibitem[Cantrell et~al.(2007)Cantrell, Cosner, DeAngelis, and
  Padr\'{o}n]{cantrell-etal-07}
R.S. Cantrell, C.~Cosner, D.L. DeAngelis, and V.~Padr\'{o}n.
\newblock The ideal free distribution as an evolutionarily stable strategy.
\newblock \emph{Journal of Biological Dynamics}, 1:\penalty0 249--271, 2007.

\bibitem[Cantrell et~al.(2010)Cantrell, Cosner, and Lou]{cantrell-etal-10}
R.S. Cantrell, C.~Cosner, and Y.~Lou.
\newblock Evolution of dispersal and the ideal free distribution.
\newblock \emph{Mathematical Biosciences and Engineering}, 7:\penalty0 17--36,
  2010.

\bibitem[Cantrell et~al.(2012{\natexlab{a}})Cantrell, Cosner, and
  Lou]{cantrell-etal-12}
R.S. Cantrell, C.~Cosner, and Y.~Lou.
\newblock Evolutionary stability of ideal free dispersal strategies in patchy
  environments.
\newblock \emph{Journal of Mathematical Biology}, 65:\penalty0 943--965,
  2012{\natexlab{a}}.

\bibitem[Cantrell et~al.(2012{\natexlab{b}})Cantrell, Cosner, Lou, and
  Ryan]{cantrell-etal-2012b}
R.S. Cantrell, C.~Cosner, Y.~Lou, and D.~Ryan.
\newblock Evolutionary stability of ideal free dispersal in spatial population
  models with nonlocal dispersal.
\newblock \emph{Canadian Applied Mathematics Quarterly}, 20:\penalty0 15--38,
  2012{\natexlab{b}}.

\bibitem[Cosner(2014)]{cosner-14}
C.~Cosner.
\newblock Reaction-diffusion-advection models for the effects and evolution of
  dispersal.
\newblock \emph{Discrete and Continuous Dynamical Systems- Series A},
  34:\penalty0 1701--1745, 2014.

\bibitem[Cressman et~al.(2004)Cressman, K{\v{r}}ivan, and
  Garay]{cressman-etal-04}
R.~Cressman, V.~K{\v{r}}ivan, and J.~Garay.
\newblock Ideal free distributions, evolutionary games, and population dynamics
  in multiple-species environments.
\newblock \emph{American Naturalist}, 164:\penalty0 473--489, 2004.

\bibitem[Diffendorfer(1998)]{diffendorfer-98}
J.E. Diffendorfer.
\newblock Testing models of source-sink dynamics and balanced dispersal.
\newblock \emph{Oikos}, 81:\penalty0 417--433, 1998.

\bibitem[Doncaster et~al.(1997)Doncaster, Clobert, Doligez, Danchin, and
  Gustafsson]{doncaster-etal-97}
C.P. Doncaster, J.~Clobert, B.~Doligez, E.~Danchin, and L.~Gustafsson.
\newblock Balanced dispersal between spatially varying local populations: an
  alternative to the source-sink model.
\newblock \emph{The American Naturalist}, 150:\penalty0 425--445, 1997.

\bibitem[Eaves et~al.(1985)Eaves, Hoffman, Rothblum, and
  Schneider]{eaves-etal-85}
B.~Eaves, A.~Hoffman, U.~Rothblum, and H.~Schneider.
\newblock Line sum symmetric scaling of square nonnegative matrices.
\newblock \emph{Mathematical Programming Study}, 25:\penalty0 124--141, 1985.

\bibitem[Fretwell and Lucas(1969)]{Fretwell}
S.D. Fretwell and H.L.~Jr. Lucas.
\newblock On territorial behavior and other factors influencing habitat
  distribution in birds.
\newblock \emph{Acta Biotheoretica}, 19:\penalty0 16--36, 1969.

\bibitem[Greenwood and Harvey(1982)]{greenwood-harvey-82}
P.J. Greenwood and P.H. Harvey.
\newblock The natal and breeding dispersal of birds.
\newblock \emph{Annual review of ecology and systematics}, 13:\penalty0 1--21,
  1982.

\bibitem[Hanski and Ovaskainen(2003)]{hanski-03}
I.~Hanski and O.~Ovaskainen.
\newblock Metapopulation theory for fragmented landscapes.
\newblock \emph{Theoretical population biology}, 64:\penalty0 119--127, 2003.

\bibitem[Harts et~al.(2016)Harts, Jaatinen, and Kokko]{harts-etal-16}
A.M.F. Harts, K.~Jaatinen, and H.~Kokko.
\newblock Evolution of natal and breeding dispersal: when is a territory an
  asset worth protecting?
\newblock \emph{Behavioral Ecology}, 27:\penalty0 287--294, 2016.

\bibitem[Hastings(1983)]{Hastings}
A.~Hastings.
\newblock Can spatial variation alone lead to selection for dispersal?
\newblock \emph{Theoretical Population Biology}, 24:\penalty0 244--251, 1983.

\bibitem[Hastings and Botsford(2006)]{hastings-botsford-06}
A.~Hastings and L.~W. Botsford.
\newblock Persistence of spatial populations depends on returning home.
\newblock \emph{Proceedings of the National Academy of Sciences}, 103:\penalty0
  6067--6072, 2006.

\bibitem[Hirsch et~al.(1977)Hirsch, Pugh, and Shub]{hirsch-etal-77}
M.W. Hirsch, C.C. Pugh, and M.~Shub.
\newblock \emph{Invariant Manifolds}, volume 583 of \emph{Lecture Notes in
  Mathematics}.
\newblock Springer-Verlag, Berlin, 1977.

\bibitem[Hsu et~al.(1996)Hsu, Smith, and Waltman]{Hsu}
S.B. Hsu, H.L. Smith, and P.~Waltman.
\newblock Competitive exclusion and coexistence for competitive systems on
  ordered {B}anach spaces.
\newblock \emph{Transaction of the American Mathematical Soceity},
  348:\penalty0 4083--4094, 1996.

\bibitem[Kirkland et~al.(2006)Kirkland, Li, and Schreiber]{KLS}
S.~Kirkland, C.K. Li, and S.~J. Schreiber.
\newblock On the evolution of dispersal in patchy landscapes.
\newblock \emph{SIAM Journal on Applied Mathematics}, 66:\penalty0 1366--1382,
  2006.

\bibitem[Korobenko and Braverman(2014)]{korobenko-braverman-14}
L.~Korobenko and E.~Braverman.
\newblock On evolutionary stability of carrying capacity driven dispersal in
  competition with regularly diffusing populations.
\newblock \emph{Journal of Mathematical Biology}, 69:\penalty0 1181--1206,
  2014.

\bibitem[Krivan(2003)]{krivan-03}
V.~Krivan.
\newblock Ideal free distributions when resources undergo population dynamics.
\newblock \emph{Theoretical Population Biology}, 64:\penalty0 25--38, 2003.

\bibitem[Krivan et~al.(2008)Krivan, Cressman, and Schneider]{krivan-etal-08}
V.~Krivan, R.~Cressman, and C.~Schneider.
\newblock The ideal free distribution: a review and synthesis of the
  game-theoretic perspective.
\newblock \emph{Theoretical Population Biology}, 73:\penalty0 403--425, 2008.

\bibitem[Li and Schreiber(2006)]{laa-06}
C.K. Li and S.J. Schreiber.
\newblock On dispersal and population growth for multistate matrix models.
\newblock \emph{Linear Algebra Appl.}, 418\penalty0 (2-3):\penalty0 900--912,
  2006.
\newblock ISSN 0024-3795.

\bibitem[McPeek and Holt(1992)]{McPeek}
M.A. McPeek and R.~D. Holt.
\newblock The evolution of dispersal in spatially and temporally varying
  environments.
\newblock \emph{American Naturalist}, 6:\penalty0 1010--1027, 1992.

\bibitem[Mouquet and Loreau(2002)]{Mouquet2002}
N.~Mouquet and M.~Loreau.
\newblock Coexistence in metacommunities: the regional similarity hypothesis.
\newblock \emph{The American Naturalist}, 159:\penalty0 420--426, 2002.

\bibitem[Mouquet and Loreau(2003)]{mouquet-loreau-03}
N.~Mouquet and M.~Loreau.
\newblock Community patterns in source-sink metacommunities.
\newblock \emph{American Naturalist}, 162:\penalty0 544--557, 2003.

\bibitem[Padr\'{o}n and Trevisan(2006)]{pardron-trevisan-06}
V.~Padr\'{o}n and M.C. Trevisan.
\newblock Environmentally induced dispersal under heterogeneous logistic
  growth.
\newblock \emph{Mathematical Biosciences}, 199:\penalty0 160--174, 2006.

\bibitem[Paradis et~al.(1998)Paradis, Baillie, Sutherland, and
  Gregory]{paradis-etal-98}
E.~Paradis, S.R. Baillie, W.J. Sutherland, and R.D. Gregory.
\newblock Patterns of natal and breeding dispersal in birds.
\newblock \emph{Journal of Animal Ecology}, 67:\penalty0 518--536, 1998.

\bibitem[Pulliam(1988)]{pulliam-88}
H.R. Pulliam.
\newblock Sources, sinks, and population regulation.
\newblock \emph{American Naturalist}, 132:\penalty0 652--661, 1988.

\bibitem[Schreiber et~al.(2000)Schreiber, Fox, and Getz]{amnat-00}
S.J. Schreiber, L.R. Fox, and W.M. Getz.
\newblock Coevolution of contrary choices in host-parasitoid systems.
\newblock \emph{American Naturalist}, 155:\penalty0 637--648, 2000.

\bibitem[Smith(1995)]{Smith}
H.L. Smith.
\newblock \emph{Monotone dynamical systems. {A}n introduction to the theory of
  competitive and cooperative systems}, volume~41 of \emph{Mathematical Surveys
  and Monographs}.
\newblock American Mathematical Society, Providence, RI, 1995.

\bibitem[Tattersall et~al.(2004)Tattersall, Macdonald, Hart, and
  Manley]{tattersall-04}
H.~Tattersall, D.~Macdonald, J.~Hart, and B.~Manley.
\newblock Balanced dispersal or source-sink-do both models describe wood mice
  in farmed landscapes?
\newblock \emph{Oikos}, 106\penalty0 (3):\penalty0 536--550, 2004.

\bibitem[van Baalen and Sabelis(1993)]{vanbaalen-sabelis-93}
M.~van Baalen and M.~W. Sabelis.
\newblock Coevolution of patch selection strategies of predator and prey and
  the consequences for ecological stability.
\newblock \emph{American Naturalist}, 142:\penalty0 646--670, 1993.

\bibitem[Watkinson and Sutherland(1995)]{watkinson-sutherland-95}
A.R. Watkinson and W.J. Sutherland.
\newblock Sources, sinks and pseudo-sinks.
\newblock \emph{Journal of Animal Ecology}, 64:\penalty0 126--130, 1995.

\end{thebibliography}

\section{Proof of Theorem \ref{main:1}\label{proof:main:1}}
One underlying mathematical difficulty in the proof of Theorem \ref{main:1}
 is that  $(u^*, 0)$ is neutrally stable, so even determining the local stability
of $(u^*, 0)$ seems to be of interest. Our main idea is to
establish the following lemma  and apply the theory of strongly monotone dynamical systems.

If assumptions (A1)-(A4) hold and $(d_{ij})$ is an ideal free dispersal strategy,
\eqref{model-2species} has three special equilibria: the trivial equilibrium $(0, 0)$,
and two semi-trivial equilibria $(u^*, 0)$ and $(0, v^*)$ assuming the latter exists. The following
result ensures that these are all possible non-negative equilibria of \eqref{model-2species}.
Note that if $(u, v)$ is a non-negative \red{equilibrium} of \eqref{model-2species}
and $(u, v)\not=(0, 0), (u^*,0), (0, v^*)$, then by (A1) and (A2), all components of $u, v$ are positive,
i.e., $(u, v)$ is a positive equilibrium.

\begin{lemma}\label{lemma:c-1} Suppose that {\rm(A1)-(A4) hold}. If
 $(d_{ij})$ is an ideal free dispersal strategy with respect with $u^*$
but  $(D_{ij})$ is not, then \eqref{model-2species}
has no positive equilibrium.
\end{lemma}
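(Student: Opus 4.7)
The plan is to suppose for contradiction that $(u,v)$ is a positive equilibrium of \eqref{model-2species} and extract a contradiction from the tension between $(d_{ij})$ being ideal free with respect to $u^*$ and $(D_{ij})$ not being. Set $w_i=u_i+v_i$, $p_i=f_i(w_i)u_i$, $q_i=f_i(w_i)v_i$, $r_i=p_i+q_i=f_i(w_i)w_i$, $y_i^*=f_i(u_i^*)u_i^*$, and abbreviate $g_i=g_i(w_i)$. The equilibrium conditions become $\sum_j d_{ij}p_j=p_i/g_i$ and $\sum_j D_{ij}q_j=q_i/g_i$; summing each over $i$ and using column stochasticity gives $\sum_i p_i(g_i-1)/g_i=0$ and, by addition, $\sum_i r_i(g_i-1)/g_i=0$.

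The key device is to apply Theorem~\ref{thm:ls} to the line-sum-symmetric matrix $A=(d_{ij}y_j^*)$ (Lemma~\ref{lemma:2-1}) with the strictly positive test vector $x_i=y_i^*/p_i$. A direct computation,
\[
\sum_{i,j}a_{ij}\frac{x_i}{x_j}=\sum_i\frac{y_i^*}{p_i}\sum_j d_{ij}p_j=\sum_i\frac{y_i^*}{g_i},\qquad\sum_{i,j}a_{ij}=\sum_i y_i^*,
\]
yields $\sum_i y_i^*(1-g_i)/g_i\ge 0$, with equality if and only if $p$ is a positive scalar multiple of $y^*$. I then split into two cases. In \emph{Case A} with $g_i(w_i)=1$ for every $i$, the equation $\sum_j d_{ij}p_j=p_i$ exhibits $p>0$ as a Perron eigenvector of the column-stochastic, irreducible matrix $(d_{ij})$, so by Perron--Frobenius together with the ideal-free identity $\sum_j d_{ij}y_j^*=y_i^*$, $p=cy^*$ for some $c>0$. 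The same argument applied to $q$ and $(D_{ij})$ gives $q=c'b^*$, where $b^*>0$ is the Perron eigenvector of $(D_{ij})$. But $p+q=r=f(u^*)u^*=y^*$ forces $c'b^*=(1-c)y^*$, so $0<c<1$ and $b^*\parallel y^*$; uniqueness of the Perron eigenvector of $(D_{ij})$ then yields $\sum_j D_{ij}y_j^*=y_i^*$, contradicting that $(D_{ij})$ is not ideal free with respect to $u^*$.

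In \emph{Case B}, $g_i(w_i)\ne 1$ for some $i$. If $p=cy^*$ then $\sum_j d_{ij}p_j=p_i/g_i$ combined with $\sum_j d_{ij}y_j^*=y_i^*$ forces $g_i=1$ for all $i$, reducing to Case A; hence $p\not\parallel y^*$ and the inequality above is strict, $\sum_i y_i^*(1-g_i)/g_i>0$. Subtracting $\sum_i r_i(1-g_i)/g_i=0$ gives $\sum_i(y_i^*-r_i)(1-g_i)/g_i>0$. However, (A3) and (A4) force a sign opposition term by term: if $g_i(w_i)>1$ then $w_i<u_i^*$ by strict monotonicity of $g_i$, and the strict monotonicity of $u\mapsto uf_i(u)$ in (A3) gives $r_i=w_if_i(w_i)<u_i^*f_i(u_i^*)=y_i^*$, so $y_i^*-r_i>0$ while $1-g_i<0$; a symmetric analysis when $g_i<1$ gives $y_i^*-r_i<0$ and $1-g_i>0$. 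Thus $(y_i^*-r_i)(1-g_i)/g_i\le 0$ for every $i$, strictly negative whenever $g_i\ne 1$, and the sum is $\le 0$, contradicting the strict positivity just obtained. Hence no positive equilibrium exists.

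The main obstacle is spotting the correct test vector $x_i=y_i^*/p_i$ so that Theorem~\ref{thm:ls} collapses cleanly onto $\sum_i y_i^*/g_i$; once that inequality is in hand, Case A reduces to uniqueness of the Perron eigenvector, while the delicate Case B (a source--sink style equilibrium) is closed by the ecological sign-opposition provided by (A3)+(A4), which ties together the monotonicity of the per-capita yield $u f_i(u)$ and that of the per-capita fitness $g_i$.
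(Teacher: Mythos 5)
Your proof is correct and takes essentially the same route as the paper's: the same line-sum-symmetric matrix $(d_{ij}f_j(u_j^*)u_j^*)$ from Lemma~\ref{lemma:2-1}, the same test vector $x_i=f_i(u_i^*)u_i^*/[f_i(u_i+v_i)u_i]$ in Theorem~\ref{thm:ls}, the same column-sum identities, and the same sign-opposition argument from the monotonicity in (A3)--(A4). The only cosmetic difference is that you settle the case $g_i(u_i+v_i)\equiv 1$ via Perron--Frobenius uniqueness of the positive eigenvector of $(d_{ij})$ rather than via the equality case of Theorem~\ref{thm:ls}, and both roads end at the same contradiction $\sum_j D_{ij}f_j(u_j^*)u_j^*=f_i(u_i^*)u_i^*$.
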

\begin{proof}
Suppose that $(u, v)$ is a positive equilibrium of \eqref{model-2species}.
Then
\begin{equation}\label{model-2species-1}
\aligned
&s_i(u_i+v_i) \sum_{j=1}^n d_{ij} f_j(u_j+v_j) u_j
-m_i(u_i+v_i) u_i=0,
\\
&s_i(u_i+v_i) \sum_{j=1}^n D_{ij} f_j(u_j+v_j) v_j
-m_i(v_i+v_i) v_i=0
\endaligned
\end{equation}
for every $i$.
Dividing the first equation of
\eqref{model-2species-1} by $s_i(u_i+v_i)$ and summing up in $i$ we obtain
\begin{equation}\label{model-2species-2}
\aligned
\sum_{i=1}^n
\frac{m_i(u_i+v_i)}{s_i(u_i+v_i)} u_i
&=\sum_{i, j=1}^n d_{ij} f_j(u_j+v_j) u_j
\\
&=\sum_{j=1}^n  f_j(u_j+v_j) u_j \left(\sum_{i=1}^n d_{ij}\right)
\\
&=\sum_{j=1}^n  f_j(u_j+v_j) u_j.
\endaligned
\end{equation}

Similarly,
\begin{equation}\label{model-2species-3}
\sum_{i=1}^n
\frac{m_i(u_i+v_i)}{s_i(u_i+v_i)} v_i
=\sum_{j=1}^n  f_j(u_j+v_j) v_j.
\end{equation}

By \eqref{model-2species-2} and \eqref{model-2species-3},
we  obtain
\begin{equation}\label{model-2species-4}
\sum_{i=1}^n
\frac{m_i(u_i+v_i)}{s_i(u_i+v_i)} (u_i+v_i)
=\sum_{j=1}^n  f_j(u_j+v_j) (u_j+v_j),
\end{equation}
which can be rewritten as
\begin{equation}\label{model-2species-5}
\sum_{i=1}^n
(u_i+v_i) f_i(u_i+v_i)
\left[g^{-1}_i(u_i+v_i)-1\right]=0.
\end{equation}
{(Here $g_i^{-1}$ denotes the reciprocal of $g_i$; not the inverse function.)}

Dividing the first equation of
\eqref{model-2species-1} by $s_i(u_i+v_i)u_i f_i(u_i+v_i)/[f_i(u_i^*) u_i^*]$ and summing up in $i$ we obtain
\begin{equation}\label{model-2species-6}
\sum_{i,j}d_{ij} \frac{f_j(u_j+v_j)u_j f_i(u_i^*)u_i^*}
{f_i(u_i+v_i) u_i}
=\sum_{i=1}^n \frac{u_i^* f_i(u_i^*) m_i(u_i+v_i)}{s_i(u_i+v_i) f_i(u_i+v_i)}.
\end{equation}

By Lemma \ref{lemma:2-1}, matrix $(d_{ij} f_j(u_j^*)u_j^*)$ is line-sum-symmetric.
Setting $a_{ij}=d_{ij} f_j(u_j^*)u_j^*$ and
$$
x_i=\frac{f_i(u_i^*)u_i^*}{f_i(u_i+v_i)u_i}
$$
in Theorem \ref{thm:ls}, we obtain
\begin{equation}\label{model-2species-6b}
\sum_{i,j}d_{ij}f_j(u_j^*) u_j^* \frac{f_i(u_i^*)u_i^*/[f_i(u_i+v_i)u_i]}
{f_j(u_j^*)u_j^*/[f_j(u_j+v_j)u_j]}
\ge \sum_{i, j} d_{ji} f_i(u_i^*) u_i^*,
\end{equation}
which can be simplified as
\begin{equation}\label{model-2species-7}
\sum_{i,j}d_{ij} \frac{f_i(u_i^*)u_i^* f_j(u_j+v_j)u_j}
{f_i(u_i+v_i)u_i}
\ge \sum_{i} f_i(u_i^*) u_i^*.
\end{equation}

By \eqref{model-2species-6} and \eqref{model-2species-7} we have
\begin{equation}
\sum_{i=1}^n \frac{u_i^* f_i(u_i^*) m_i(u_i+v_i)}{s_i(u_i+v_i) f_i(u_i+v_i)}
\ge \sum_{i} f_i(u_i^*) u_i^*,
\end{equation}
which can be written as
\begin{equation}\label{model-2species-8}
\sum_{i=1}^n u_i^* f_i(u_i^*) \left[g^{-1}_i(u_i+v_i)-1\right]\ge 0.
\end{equation}

It follows from \eqref{model-2species-5}, \eqref{model-2species-8} and $g_i(u_i^*)=1$
that
$$
\sum_{i} \left[ (u_i+v_i)f_i(u_i+v_i)-u_i^* f_i(u_i^*)\right]
\left[g^{-1}_i(u_i+v_i)-g^{-1}_i(u_i^*)\right]\le 0.
$$
As $u_if_i(u_i)$ is strictly increasing and $g_i$ is strictly deceasing,
we have $u_i+v_i=u_i^*$ for every $i$
and the inequality in \eqref{model-2species-6b} must be an equality.
As $(d_{ij})$ is irreducible and $f_i(u_i^*)u_i^*$ is positive
for every $i$, $(d_{ij}f_i(u_i^*)u_i^*)$ is irreducible.
By Theorem \ref{thm:ls}, the equality in \eqref{model-2species-6b} holds
if and only if
\begin{equation}\label{model-2species-9}
\frac{f_i(u_i^*)u_i^*}{f_i(u_i+v_i)u_i}=\frac{f_j(u_j^*)u_j^*}{f_j(u_j+v_j)u_j},
\quad \forall 1\le i, j\le n.
\end{equation}
As $u_i+v_i=u^*_i$ for each $i$, \eqref{model-2species-9}
implies that $u_i^* u_j=u_i u_j^*$ for every $i, j$.
Hence, $u_i=cu_i^*$ for some constant $c>0$.
Since $u_i+v_i=u_i^*$ and $v_i>0$, we have $v_i=(1-c)u_i^*$
for some $c\in (0, 1)$.
Substituting $u_i=cu_i^*$ and $v_i=(1-c)u_i^*$
into \eqref{model-2species-1} and applying $g_i(u_i^*)=1$ we
have
$$
\sum_{j} D_{ij} f_j(u_j^*) u_j^*=f_i(u_i^*) u_i^*
$$
for every $i$, which contradicts the assumption on $(D_{ij})$.
\end{proof}


Next we study the stability of $(0, v^*)$,
where $v^*$ is a componentwise positive solution of
\begin{equation}\label{eq:yi}
s_i(v_i)\sum_{j} D_{ij} f_j(v_j)  v_j=m_i(v_i) v_i, \quad 1\le i\le n.
\end{equation}

\begin{lemma}\label{lemma:c-2} Suppose that {\rm(A1)-(A4)} hold.
If $(d_{ij})$ is an ideal free dispersal strategy with respect to $u^*$ and $(D_{ij})$ is not,
then $(0, v^*)$ is unstable.
\end{lemma}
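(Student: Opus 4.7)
The plan is to establish \emph{linear instability} of $(0,v^*)$: the Jacobian of \eqref{model-2species} at $(0,v^*)$ has an eigenvalue with positive real part (continuous time) or of modulus strictly greater than one (discrete time). A direct computation shows $\partial(\Delta u_i)/\partial v_j\big|_{u=0}=0$, so this Jacobian is block triangular; its $v$-block is stable by Lemma~\ref{lemma:single2-1}, while its $u$-block is
\[
L := S\,D\,\Lambda - M,\qquad S=\diag{s_i(v_i^*)},\ \Lambda=\diag{f_i(v_i^*)},\ M=\diag{m_i(v_i^*)},\ D=(d_{ij}).
\]
It therefore suffices to prove $\rho(L)>0$; in discrete time one then notes that $1+\rho(L)$ is an eigenvalue of the iteration block $I+L$ of modulus strictly greater than $1$.

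Suppose for contradiction that $\rho(L)\le 0$. Since $L$ is quasi-positive and irreducible (as $(d_{ij})$ is irreducible), Perron--Frobenius applied to a sufficiently large positive shift furnishes a positive eigenvector $x$ with $Lx=\rho(L)x\le 0$, i.e.
\[
s_i(v_i^*)\sum_{j=1}^n d_{ij}\,f_j(v_j^*)\,x_j \;\le\; m_i(v_i^*)\,x_i,\qquad 1\le i\le n.
\]
The key step is to multiply the $i$th inequality by the positive weight $\dfrac{f_i(u_i^*)u_i^*}{s_i(v_i^*)f_i(v_i^*)x_i}$ and sum over $i$. After simplification the left-hand side becomes $\sum_{i,j}A_{ij}\,y_i/y_j$ with $A_{ij}:=d_{ij}f_j(u_j^*)u_j^*$ and $y_i:=f_i(u_i^*)u_i^*/[f_i(v_i^*)x_i]$. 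By Lemma~\ref{lemma:2-1} $A$ is line-sum-symmetric, and by (A1) it is irreducible; Theorem~\ref{thm:ls} then gives
\[
\sum_{i,j=1}^n A_{ij}\,\frac{y_i}{y_j} \;\ge\; \sum_{i,j=1}^n A_{ij} \;=\; \sum_{i=1}^n f_i(u_i^*)\,u_i^*
\]
by column-stochasticity of $(d_{ij})$, while the right-hand side simplifies to $\sum_i f_i(u_i^*)u_i^*/g_i(v_i^*)$. Consequently,
\[
\sum_{i=1}^n f_i(u_i^*)\,u_i^*\,\Bigl[\tfrac{1}{g_i(v_i^*)}-1\Bigr] \;\ge\; 0. \qquad(\mathrm I)
\]
An analogous manipulation of the defining equation for $v^*$ (divide by $s_i(v_i^*)$ and sum, exactly as in the derivation of \eqref{model-2species-3} in the preceding proof) yields the identity
\[
\sum_{i=1}^n f_i(v_i^*)\,v_i^*\,\Bigl[\tfrac{1}{g_i(v_i^*)}-1\Bigr] \;=\; 0. \qquad(\mathrm{II})
\]

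Subtracting (II) from (I) produces $\sum_i[f_i(u_i^*)u_i^*-f_i(v_i^*)v_i^*][1/g_i(v_i^*)-1]\ge 0$. By (A3) the first factor in each term has the sign of $u_i^*-v_i^*$, and by (A4) together with $g_i(u_i^*)=1$ the second factor has the sign of $v_i^*-u_i^*$ (since $g_i$ is strictly decreasing). Every summand is therefore non-positive, forcing $v_i^*=u_i^*$ for all $i$. Substituting $v^*=u^*$ into the equilibrium equation for $v^*$ and using $s_i(u_i^*)f_i(u_i^*)=m_i(u_i^*)$ (from $g_i(u_i^*)=1$) then yields $\sum_j D_{ij}f_j(u_j^*)u_j^*=f_i(u_i^*)u_i^*$ for every $i$; that is, $(D_{ij})$ is an ideal free dispersal strategy with respect to $u^*$, contradicting the hypothesis. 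This contradiction shows $\rho(L)>0$, and $(0,v^*)$ is unstable.

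The principal obstacle is choosing the right weighting. Only the weight $f_i(u_i^*)u_i^*/[s_i(v_i^*)f_i(v_i^*)x_i]$ converts the summed inequality into $\sum A_{ij}y_i/y_j$ for the line-sum-symmetric matrix $A=(d_{ij}f_j(u_j^*)u_j^*)$ of Lemma~\ref{lemma:2-1}, which is the only combination that can usefully be fed into Theorem~\ref{thm:ls}. Once this is in place, the interplay between (I), the equilibrium identity (II), and the strict monotonicities in (A3)--(A4) delivers the rigidity $v^*=u^*$, which collides with the failure of $(D_{ij})$ to be ideal free.
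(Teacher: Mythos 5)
Your proof is correct and follows essentially the same route as the paper's: the same weight $f_i(u_i^*)u_i^*/[s_i(v_i^*)f_i(v_i^*)x_i]$ applied to the Perron eigenvector of the $u$-block, the same appeal to Lemma~\ref{lemma:2-1} and Theorem~\ref{thm:ls}, the same summed equilibrium identity for $v^*$, and the same monotonicity argument forcing $v^*=u^*$ and hence a contradiction with $(D_{ij})$ not being ideal free. The only cosmetic difference is that you argue by contradiction from $\rho(L)\le 0$, whereas the paper keeps $\lambda^*$ explicit and bounds it below by a strictly positive quantity.
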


\begin{proof}
The stability of $(0, v^*)$ is determined by
the dominant eigenvalue, denoted by $\lambda^*$, of the linear problem
\begin{equation}\label{eq:c-20-a}
\lambda \varphi_i=s_i(v_i^*) \sum_{j} d_{ij} f_j(v_j^*)\varphi_j -m_i(v_i^*)\varphi_i, \quad 1\le i\le n.
\end{equation}
Since matrix $(d_{ij})$ is non-negative and irreducible
and $f_j(v_j^*)>0$ for each $j$,
 by the Perron-Frobenius Theorem, the dominant eigenvalue of \eqref{eq:c-20-a} exists,
and the corresponding $\varphi_i$ can be chosen to be positive for every $i$. Multiply \eqref{eq:c-20-a} by $f_i(u_i^*)u_i^*/[s_i(v_i^*) f_i(v_i^*)\varphi_i]$
and sum up the result in $i$.  We have
\begin{equation}
\lambda^* \sum_{i} \frac{ f_i(u_i^*) u_i^*}{s_i(v_i^*) f_i(v_i^*)}
=\sum_{i, j} \red{d}_{ij} \frac{f_j(v_j^*) f_i(u_i^*) u_i^* \varphi_j}{f_i(v_i^*) \varphi_i}
-\sum_{i} g^{-1}_i(v_i^*) f_i(u_i^*) u_i^*, \quad
1\le i\le n.
\end{equation}
Since the matrix $(d_{ij}f_j(u_j^*) u_j^*)$ is \red{ line-sum-symmetric},
\begin{equation}
\aligned
\sum_{i, j}  \red{d}_{ij} \frac{ f_j(v_j^*) \varphi_j f_i(u_i^*) u_i^*}{f_i(v_i^*) \varphi_i}
&=\sum_{i, j}  \red{d}_{ij} f_j(u_j^*) u_j^* \frac{ f_j(v_j^*) \varphi_j f_i(u_i^*) u_i^*}{f_i(v_i^*) \varphi_i f_j(u_j^*) u_j^*}
\\
&\ge \sum_{i, j}  \red{d}_{ij} f_j(u_j^*) u_j^*
\\
&=\sum_{j} f_j(u_j^*) u_j^*,
\endaligned
\end{equation}
where we applied Theorem \ref{thm:ls} by setting
$a_{ij}= \red{d}_{ij} f_j(u_j^*) u_j^*$
and $x_i=f_i(u_i^*) u_i^*/[f_i(v_i^*)\varphi_i]$.

Therefore,
\begin{equation}\label{eq:c-30-a}
\lambda^* \sum_{i} \frac{f_i(u_i^*) u_i^*}{s_i(v_i^*) f_i(v_i^*)}
\ge \sum_{i} f_i(u_i^*) u_i^* [1-g^{-1}(v_i^*)].
\end{equation}

Recall that $v_i^*$ satisfies
\begin{equation}\label{eq:yi*}
s_i(v_i^*) \sum_{j} D_{ij}  f_j(v_j^*) v_j^*-m_i(v_i^*) v_i^*=0, \quad 1\le i\le n.
\end{equation}

Summing the equation of $v_i^*$ over $i$, we have
\begin{equation}\label{eq:c-31-a}
\sum_{i} f_i(v_i^*) v_i^* \left[1-g^{-1}_i(v_i^*)\right]=0.
\end{equation}

Hence, by \eqref{eq:c-30-a} and \eqref{eq:c-31-a}
we have
\begin{equation}\label{eq:c-32-a}
\aligned
\lambda^* \sum_{i} \frac{ f_i(u_i^*) u_i^*}{s_i(v_i^*) f_i(v_i^*)}
&\ge \sum_{i} [1-g^{-1}_i(v_i^*)] \left[f_i(u_i^*) u_i^*-f_i(v_i^*) v_i^*\right]
\\
&=\sum_{i} [g^{-1}_i(u_i^*)-g^{-1}_i(v_i^*)]  \left[f_i(u_i^*) u_i^*-f_i(v_i^*) v_i^*\right]
\\
&\ge 0,
\endaligned
\end{equation}
as both $g^{-1}_i$ and $u_i f_i(u_i)$ are monotone increasing and $g_i(u_i^*)=1$.
It suffices to show that the last inequality of \eqref{eq:c-32-a} is strict.
If not, then $u_i^*=v_i^*$ for every $i$.
By \eqref{eq:yi*} and $g_i(u_i^*)=1$ for every $i$ we see that
$\sum_{j} D_{ij} f_j(u_j^*)u_j^*=f_i(u_i^*) u_i^*$ for every $i$,  which contradicts our assumption.
Therefore, $\lambda^*>0$, i.e. $(0, v^*)$ is unstable.
\end{proof}

\noindent{\bf Proof of Theorem \ref{main:1}.}
By (A1) and (A2),  \eqref{model-2species}
is a strongly monotone dynamical system.
Theorem \ref{main:1} follows from Lemmas \ref{lemma:c-1} and \ref{lemma:c-2}
and Theorem A of \cite{Hsu}.
\qed

\section{Proof of Theorem \ref{main:1B}\label{proof:main:1B}}
{ Let $(\tilde u,\tilde v)$ be a non-zero equilibrium for \eqref{model-2species}. Define $\tilde w=\tilde u+\tilde v$. Assume $\tilde u\neq 0$ (a parallel argument applies if $\tilde v\neq 0$). The proof of Lemma~\ref{lemma:c-1}  implies that $\tilde w= u^*$. }Therefore, $\tilde u$ must be a positive multiple of $u^*$; say $\tilde u =\alpha u^*$ with $\alpha>0$. As $\tilde v = \tilde w- \tilde u=(1-\alpha^*)u^*\ge 0$, $\alpha$ must be $\le 1$. This completes the proof of the first claim.

{To prove the second claim, we will show that $\mathcal{E}$ is a normally hyperbolic attractor in the sense of \citet{hirsch-etal-77}. We present the proof of this claim in the discrete-time case. The proof for the continuous-time case is similar. Define $F,G:\bR_+^n\times \bR_+^n \to \bR_+^n$ by
\[
\begin{aligned}
F_i(u,v)&=&s_i(u_i+v_i) \sum_{j=1}^n d_{ij} f_j(u_j+v_j) u_j
+(1-m_i(u_i+v_i)) u_i\\
 G_i(u,v)&=&s_i(u_i+v_i) \sum_{j=1}^n D_{ij} f_j(u_j+v_j) v_j
+(1-m_i(v_i+v_i)) v_i.
\end{aligned}
\]
The discrete-time dynamics of  \eqref{model-2species} are given by iterating the map $H=(F,G)$. The derivative matrix of $H$ is of the form
\[
J=\begin{pmatrix}
\partial_u F & \partial _v F\\
\partial_u G& \partial_v G
\end{pmatrix}
\]
where $\partial_u F$, $\partial_v F$, $\partial_u G$, and $\partial_v G$ denote the $n\times n$ matrices of partials $\frac{\partial F_i}{\partial u_j}$,  $\frac{\partial F_i}{\partial v_j}$,  $\frac{\partial G_i}{\partial u_j}$, and  $\frac{\partial G_i}{\partial v_j}$, respectively. Assumption (A3) implies that the off-diagonal elements of $\partial_u F$ satisfy
\[
\frac{\partial F_i}{ \partial u_j} = s_i(u_i+v_i) d_{ij} (f_j'(u_j+v_j)u_j+f_j(u_j+v_j))\ge 0 \mbox{ for }j\neq i
\]
with equality if and only if $d_{ij}=0$.
The on-diagonal terms, by assumption, satisfy
\[
\frac{\partial F_i}{ \partial u_i} = \frac{\partial}{\partial u_i} \left( s_i(u_i+v_i)d_{ii}f_i(u_i+v_i)u_i+(1-m_i(u_i+v_i))u_i\right)>0.
\]
By assumption, the entries of $\partial_v F$ satisfy
\[
\frac{\partial F_i}{ \partial v_j} = s_i(u_i+v_i) d_{ij} f_j'(u_j+v_j)u_j\le 0 \mbox{ for }j\neq i
\]
with equality if and only if $d_{ij}=0$ or $u_j=0$,  and
\[
\frac{\partial F_i}{ \partial v_i} = s_i'(u_i+v_i) \sum_{j=1}^n d_{ij} f_j(u_j+v_j) u_j+s_i(u_i+v_i) d_{ii} f_i'(u_i+v_i) u_i
-m_i'(u_i+v_i) u_i\le 0.
\]
Analogous statements apply for $\partial_u G$ and $\partial_v G$ with the $d$ matrix being replaced by the $D$ matrix. }

{For points in $\mathcal{E}^\circ= \{(\alpha u^*,(1-\alpha) u^*): \alpha\in (0,1)\}$, $J$ is primitive with respect to the  competitive ordering  $\ge_K $ on $\bR_+^n\times \bR_+^n$ i.e. $(\tilde u, \tilde v)\ge_K (u,v)$ if $\tilde u_i \ge u_i$ and $\tilde v_i\le v_i$ for all $i$. Since $\mathcal{E}$ is a line of equilibria, $J$ for any point on $\mathcal{E}$ has an
eigenvalue of one associated with the eigenvector $(u^*, -u^*)$. The Perron Frobenius theorem (with respect to the competitive ordering) implies that all the other eigenvalues of
$J$ for points on $\mathcal{E}^\circ$ are strictly less than one in absolute value. }

{ Next consider a point on $\mathcal{E}\setminus \mathcal{E}^\circ=\{ (0,u^*),(u^*,0)\}$, say $(0,u^*)$. At this point, $J$ has a lower triangular block structure as $\partial_v F$ is the zero matrix. Hence, the eigenvalues of $J$ are determined by the matrices $\partial_u F$ and $\partial_v G$ evaluated at $(0,u^*)$. We claim that the dominant eigenvalue of $\partial_v G$ is strictly less than one in absolute value. To see why, we can express the single strategy mapping $v\mapsto G(0,v)$ in the form $v\mapsto A(v)v$ and \[\partial_v G(0,v)=A(v)+\sum_{i=1}^n \frac{\partial A}{\partial v_i}(v)\rm{diag}(v) \] where $\rm{diag}(v)$ denotes a diagonal matrix with diagonal entries $v_1,\dots,v_n$. By assumption, the matrices of partial derivatives $\frac{\partial A}{\partial v_i}$ have entries that are non-positive with some strictly negative. At the equilibrium $v=u^*$ for the single strategy dynamics, we have $A(u^*)=u^*$ and, consequently, the dominant eigenvalue of $A(u^*)$ equals one. As $\partial_v G(0,u^*)$ is a primitive matrix with some entries strictly smaller than the entries of $A(u^*)$ and none of the entries larger, the dominant eigenvalue of $\partial_v G(0,u^*)$ is strictly less than one, as claimed. On the other hand, the matrix $\partial F_u$ evaluated at $(0,u^*)$ is also primitive. Due to the line of equilibria $\mathcal{E}$, this primitive matrix  has a dominant eigenvalue of $1$ and the remaining eigenvalues are strictly less than one in absolute value.}

{Hence, we have shown that  $\mathcal{E}$ is a normally hyperbolic one dimensional attractor. Theorem 4.1 of \cite{hirsch-etal-77} implies that there exists a neighborhood $U\subset \bR_+^n\times \bR_+^n$ of $\mathcal{E}$ and a homeomorphism $h:[0,1] \times V \to U$ with $V=\{z\in \bR^{2n-1}: \|z\|< 1\}$ such that $h(\alpha,0)=(\alpha u^*,(1-\alpha) u^*)$, $h(0,V)=\{(0,v)\in U\}$, $h(1,V)=\{(u,0)\in U\}$, and $\lim_{n\to\infty} H^n(u,v)=(\alpha u^* ,(1-\alpha u^*))$ for all $(u,v)\in h(\{\alpha\}\times V)$.}

\section{Proof of Theorem \ref{main:2}\label{proof:main:2}}
 For $1\le i\le k$, let $u_i^*>0$  be the unique solution to $g_i(u_i^*)=1$. Define $u^* = (u_1^*,\dots,u_k^*,0,0,\dots,0)$. Provided it exists, let $v^*$ be the unique, positive equilibrium to \eqref{model-single} for species $v$, else let $v^*=(0,0,\dots,0)$ be the zero equilibrium. To prove the theorem, we prove two lemmas which imply there are no strongly positive equilibria $(\tilde u,\tilde v)$ and all equilibria, except $(u^*,0)$, are linearly unstable. From the theory of monotone dynamical systems~\citep{Smith} it follows that all solutions with strongly positive initial conditions converge to $(u^*,0)$. \medskip

\begin{lemma}~\label{char}  Let $(\tilde u ,\tilde v)$ be a component-wise non-negative equilibrium of \eqref{model-2species-v2}. Then \red{one of the following statements holds}: (i) $(\tilde u,\tilde v)=(0,0)$, (ii) $\tilde u_i =u_i^*$ for some $1\le i\le k$ and $\tilde v=0$, (iii)  $(\tilde u,\tilde v)=(0,v^*)$, and (iv) $\tilde u_i >0$ for some $1\le i \le k$, $\tilde v_j>0$ for some $j$, and $\tilde u_\ell =0$ and $g_\ell(\tilde v_\ell)>1$ for some $1\le \ell \le k$.\end{lemma}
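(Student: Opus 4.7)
The plan is to classify the non-negative equilibria by exploiting that $\tilde u$ is sedentary while $\tilde v$ is dispersed by an irreducible matrix. Setting $\tilde w_i=\tilde u_i+\tilde v_i$, the equilibrium equation for $\tilde u$ reads
\[
\tilde u_i\bigl[s_i(\tilde w_i)f_i(\tilde w_i)-m_i(\tilde w_i)\bigr]=0,\qquad 1\le i\le n,
\]
so either $\tilde u_i=0$ or $g_i(\tilde w_i)=1$. Because $g_i$ is strictly decreasing and $g_i(0)\le 1$ for $i>k$, the equation $g_i(w)=1$ has no solution in $\S$ when $i>k$, forcing $\tilde u_i=0$ for every $i>k$; for $1\le i\le k$ the alternative becomes $\tilde u_i=0$ or $\tilde w_i=u_i^*$. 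For $\tilde v$ I would invoke the usual support argument: let $I=\{i:\tilde v_i>0\}$, use the equilibrium equation together with column-stochasticity to deduce $D_{ij}=0$ for $j\in I$, $i\notin I$, and then apply irreducibility of $(D_{ij})$ to conclude that either $\tilde v\equiv 0$ or $\tilde v$ is componentwise positive.

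Three of the four cases then follow immediately. If $\tilde v=0$ and $\tilde u=0$ we are in (i). If $\tilde u=0$ and $\tilde v>0$, then $\tilde v$ is a positive equilibrium of the single-species system \eqref{model-single} driven by $(D_{ij})$, so by Lemma~\ref{lemma:single2-1} it must coincide with $v^*$, giving (iii). If $\tilde u\ne 0$ and $\tilde v=0$, the $u$-equations give $\tilde u_i\in\{0,u_i^*\}$ with $\tilde u_i=0$ for $i>k$ and at least one $i\le k$ achieving $u_i^*$, yielding (ii).

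The remaining case $\tilde u\ne 0$, $\tilde v>0$ is where the work is. Dividing the $v$-equilibrium equation by $s_i(\tilde w_i)$, summing over $i$, and using $\sum_i D_{ij}=1$ gives
\[
\sum_{i=1}^{n} f_i(\tilde w_i)\,\tilde v_i\,\bigl[1-g_i(\tilde w_i)^{-1}\bigr]=0.
\]
For $i>k$ we have $\tilde u_i=0$ and hence $\tilde w_i=\tilde v_i>0$, and the assumption that $g_i(0)<1$ for at least one $i>k$ (combined with $g_i$ strictly decreasing) produces at least one strictly negative summand. The vanishing of the sum therefore forces some index $\ell$ with $g_\ell(\tilde w_\ell)>1$; since $g_i(\tilde w_i)\le g_i(0)\le 1$ for every $i>k$, this $\ell$ must satisfy $\ell\le k$. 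If $\tilde u_\ell>0$ then $g_\ell(\tilde w_\ell)=1$, a contradiction, so $\tilde u_\ell=0$ and $g_\ell(\tilde v_\ell)>1$, establishing (iv). The main obstacle is precisely this last step: the weighted-sum identity together with the guaranteed presence of a genuine sink among the patches $i>k$ is what forces the existence of a patch $\ell\le k$ simultaneously unoccupied by $\tilde u$ and acting as a source for $\tilde v$.
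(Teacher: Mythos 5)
Your proof is correct, and your handling of cases (i)--(iii) and the reduction of $\tilde u$ to its support on $\{1,\dots,k\}$ coincide with the paper's argument. Where you genuinely diverge is case (iv): the paper obtains the source patch by a spectral argument, writing the $v$-equilibrium condition as $\rho(SD\Lambda-M)=0$ and invoking the reduction inequality of Lemma~\ref{reduction} to get $\rho(S\Lambda-M)=\max_i\{s_i(\tilde w_i)f_i(\tilde w_i)-m_i(\tilde w_i)\}>0$, hence some $i\le k$ with $g_i(\tilde u_i+\tilde v_i)>1$; you instead derive the weighted-sum identity $\sum_i f_i(\tilde w_i)\tilde v_i\bigl[1-g_i(\tilde w_i)^{-1}\bigr]=0$ by dividing by $s_i$ and using column-stochasticity, and then let the strictly negative sink-patch terms force a strictly positive term. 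Both routes are legitimate: your identity is the same device the paper uses in the second half of the proof of Theorem~\ref{source-sink}, and it avoids Perron--Frobenius entirely for this step, whereas the paper's route reuses Lemma~\ref{reduction} and tacitly relies on $S\Lambda-M$ being non-scalar (guaranteed by the sink patches) to make the inequality strict --- a point your version never needs. The final step, deducing $\tilde u_\ell=0$ from $g_\ell(\tilde w_\ell)>1$ via the $u$-equation, is common to both. One cosmetic slip: for $i>k$ the hypothesis only gives $g_i(0)\le 1$, so $g_i(w)=1$ may have the solution $w=0$; your conclusion $\tilde u_i=0$ for $i>k$ nevertheless stands, since $\tilde u_i>0$ would force $\tilde w_i>0$ and hence $g_i(\tilde w_i)<g_i(0)\le 1$.
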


\begin{proof}
(i)-(iii) describe all equilibria $(\tilde u, \tilde v)$ where either $\tilde u=0$ or $\tilde v=0$. Consider an equilibrium where $\tilde u\neq 0$ and $\tilde v\neq 0$. Then $\tilde u_i>0$ for some $1\le i \le k$, and $\tilde v_j>0$ for some $j$. In fact, irreducibility of $D$ implies $\tilde v_i>0$ for all $i$. As $g_i$ are decreasing functions and $g_i(0)\le 1$ for $i>k$, $\tilde u_i=0$ for $i>k$. Define $S$, $\Lambda$, $M$ to be the diagonal matrices with diagonal entries $s_i(\tilde u_i+\tilde v_i)$, $f_i(\tilde u_i+\tilde v_i)$, $m_i(\tilde u_i+\tilde v_i)$. The equilibrium condition $\Delta v=0$ implies that $0=\rho(SD\Lambda -M)$. Lemma~\ref{reduction} implies
\[
0=\rho(SD\Lambda -M)< \rho(S\Lambda-M)=\max_i \{s_i(\tilde u_i+\tilde v_i)f(\tilde u_i+\tilde v_i)-m_i(\tilde u_i+\tilde v_i)\}.
\]
Hence, there exists some $1\le i\le k$ such that $g_i(\tilde v_i)\ge g_i(\tilde u_i +\tilde v_i)>1$.
\end{proof}

\begin{lemma} The equilibrium $(u^*,0)$ is linearly stable and all other equilibria $(\tilde u, \tilde v)$ are unstable. \end{lemma}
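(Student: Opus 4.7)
My plan is to compute the Jacobian $J$ of \eqref{model-2species-v2} at each equilibrium listed in Lemma~\ref{char} and exploit a decoupling feature: because the sedentary species' equations are patch-diagonal and each $\Delta u_i$ carries an overall $u_i$ factor, whenever $\tilde u_i=0$ the $u_i$-th row of $J$ has all entries equal to zero except the diagonal entry
\[
A_{ii}:= s_i(\tilde v_i)f_i(\tilde v_i)-m_i(\tilde v_i)=m_i(\tilde v_i)(g_i(\tilde v_i)-1).
\]
Hence $e_{u_i}$ is a left eigenvector of $J$ with eigenvalue $A_{ii}$, and in discrete time the corresponding eigenvalue of the time-one map is $1+A_{ii}$. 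So for every instability case it suffices to produce one patch $\ell$ with $\tilde u_\ell=0$ and $g_\ell(\tilde v_\ell)>1$, since then $A_{\ell\ell}>0$ is a positive real eigenvalue (with $1+A_{\ell\ell}>1$ in discrete time).

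Lemma~\ref{char} supplies such an $\ell$ in three of the four cases directly. For $(0,0)$ any source $\ell\le k$ works since $g_\ell(0)>1$; for a type~(ii) equilibrium distinct from $(u^*,0)$, some $\ell\le k$ has $\tilde u_\ell=0=\tilde v_\ell$, and again $g_\ell(0)>1$; case~(iv) names such an $\ell$ in its statement. For the remaining case $(0,v^*)$ I will apply Theorem~\ref{source-sink} to the single-species equilibrium $v^*$: the ideal-free alternative $g_i(v_i^*)=1$ for all $i$ is impossible because strict monotonicity of $g_i$ together with the hypothesis $g_i(0)<1$ for some $i>k$ forces $g_i(v_i^*)<1$ at that sink. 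Hence some $j$ satisfies $g_j(v_j^*)>1$, and necessarily $j\le k$ since $g_i(v_i^*)\le g_i(0)\le 1$ for $i>k$.

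For the linear stability of $(u^*,0)$ I will first observe that $\partial_u\Delta v$ vanishes there, because every term of the $v$-equation carries a $v$-factor. Consequently $J$ is block upper-triangular with diagonal blocks $A=\partial_u\Delta u$ and $C=\partial_v\Delta v$. The block $A$ is diagonal (by patch decoupling of the $u$-dynamics) with $A_{ii}=u_i^* m_i(u_i^*)g_i'(u_i^*)<0$ for sources (using $g_i(u_i^*)=1$ and a product-rule computation) and $A_{ii}=m_i(0)(g_i(0)-1)\le 0$ for sinks. The block $C$ equals $SD\Lambda-M$ with $S,\Lambda,M$ the diagonal matrices of $s_i(u_i^*),f_i(u_i^*),m_i(u_i^*)$, so Lemma~\ref{reduction} yields
\[
\rho(C)\le\rho(S\Lambda-M)=\max_i m_i(u_i^*)(g_i(u_i^*)-1)=0,
\]
with strict inequality because $S\Lambda-M$ is not a scalar matrix: its diagonal is zero on sources and strictly negative at any true sink where $g_i(0)<1$. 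Together these bounds give $\mathrm{Re}\,\lambda\le 0$ for every eigenvalue of $J$, so $(u^*,0)$ is linearly stable.

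The main obstacle is that $A$ may have a zero eigenvalue whenever a sink satisfies $g_i(0)=1$, making the linear stability of $(u^*,0)$ only marginal. This is not fatal for the global conclusion of Theorem~\ref{main:2}: combined with Lemma~\ref{char}, which excludes any coexistence equilibrium strictly between $(0,0)$ and $(u^*,0)$ in the competitive ordering, the stability and instability information above feeds into the monotone dynamical systems theory of \cite{Hsu} exactly as in the proof of Theorem~\ref{main:1} to yield global asymptotic stability among positive initial data. In discrete time the same arguments apply, with the stronger form of assumption~(A3) used to guarantee $1+A_{ii}>0$ and with Lemma~\ref{reduction} replaced by its analogue bounding the spectral radius of $SD\Lambda+(I-M)$ strictly below one at any true sink.
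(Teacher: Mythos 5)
Your proof is correct and follows essentially the same route as the paper: instability of every equilibrium other than $(u^*,0)$ comes from exhibiting a patch $\ell$ with $\tilde u_\ell=0$ and $g_\ell(\tilde v_\ell)>1$ (supplied by Lemma~\ref{char}, or by the source--sink dichotomy at $(0,v^*)$, which is itself a consequence of Lemma~\ref{reduction}), and this is an eigenvalue of the Jacobian because the sedentary $u$-equations decouple by patch, while stability of $(u^*,0)$ rests on $\rho(SD\Lambda-M)<\rho(S\Lambda-M)=0$ from Lemma~\ref{reduction}. You are in fact somewhat more thorough than the paper, which only checks the $v$-block of the Jacobian at $(u^*,0)$ and does not flag the marginal $u$-direction that arises when some sink has $g_i(0)=1$.
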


\begin{proof} We begin by showing $(u^*,0)$ is linearly stable.  Let $S$, $\Lambda$, and $M$  be the diagonal matrices with diagonal entries $s_i(u_i^*)$, $f_i(u_i^*)$, and $m_i(u_i^*)$. We have $\rho(S\Lambda-M)=\max_i s_i(u_i^*)f(u_i^*)-m_i(u_i^*) = 0.$ {As $g_i(0)<1$ for some $i>k$}, Lemma~\ref{reduction} implies  $\rho(SD\Lambda -M)< \rho(S\Lambda-M)=0$. Hence, $(u^*,0)$ is linearly stable.

To show the remaining types of equilibria are unstable, we consider the four cases given by Lemma~\ref{char}. For cases (i), (ii) with $\tilde u \neq u^*$, and (iv),  $\tilde u_i=0$ for some $1\le i \le k$ and, consequently, $s_i(\tilde u_i ) f_i(\tilde u_i)-m_i(\tilde u_i)>0$ for this $i$, ensuring instability. For case (iii), define $S$, $\Lambda$, and $M$ to be the diagonal matrices with  diagonal entries $s_i(v_i^*)$, $f_i(v_i^*)$, and $m_i(v_i^*)$. Then  $0=\rho(SD\Lambda -M)$. Lemma~\ref{reduction} implies  $0=\rho(SD\Lambda -M)< \rho(S\Lambda-M)$. Hence, $(0,v^*)$ is linearly unstable.
\end{proof}

\begin{lemma} If $u_0$ and $v_0$ are componentwise positive, then the $\omega$-limit set $\omega((u_0,v_0))=\{(u^*,0)\}$. \end{lemma}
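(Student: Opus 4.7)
The plan is to combine strong monotonicity of the two-species system with the equilibrium analysis of the two preceding lemmas. First, I would verify that \eqref{model-2species-v2} generates a strongly monotone semiflow on $\S^{2n}$ with respect to the competitive ordering $(u,v)\le_K(\hat u,\hat v)$ iff $u_i\le\hat u_i$ and $v_i\ge\hat v_i$ for every $i$. Under (A2), the cross-species partials $\partial u'_i/\partial v_j$ and $\partial v'_i/\partial u_j$ are nonpositive; irreducibility of $(D_{ij})$ in (A1), combined with the monotonicity/positivity in (A3), renders the Jacobian in transformed coordinates $\tilde v=-v$ irreducible with nonnegative off-diagonal entries. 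In the discrete-time case, the strengthened hypothesis in (A3) additionally guarantees that the one-step map is injective and monotone, so that the analogous strongly order-preserving property holds.

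Second, I would show any orbit from componentwise positive $(u_0,v_0)$ is bounded: by (A4), $g_i(u_i)<1$ for large $u_i$, which together with the column-stochasticity of $(d_{ij})$ and $(D_{ij})$ yields a compact absorbing set. Hence $\omega((u_0,v_0))$ is a nonempty compact invariant subset of $\S^{2n}$.

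Third, I would invoke the convergence theorem for strongly monotone dissipative semiflows (Theorem A of \cite{Hsu}, already used in the proof of Theorem \ref{main:1}, or analogous results from \cite{Smith}), so that $\omega((u_0,v_0))$ is contained in the equilibrium set of \eqref{model-2species-v2}. By Lemma \ref{char} the possible equilibria fall into four categories, and the preceding lemma identifies $(u^*,0)$ as the unique linearly stable one. A Butler--McGehee type argument, invoking strong order-preservation and the nontrivial unstable manifolds at the hyperbolic unstable equilibria, excludes each other equilibrium from $\omega((u_0,v_0))$ for a componentwise positive orbit, forcing $\omega((u_0,v_0))=\{(u^*,0)\}$.

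The main obstacle is the final Butler--McGehee step as applied to the mixed equilibria of Lemma \ref{char}(iv), which lie on the boundary because $\tilde u_\ell = 0$ at some source patch $\ell$ with $g_\ell(\tilde v_\ell)>1$. One must verify that the unstable manifolds of these boundary equilibria protrude into the strictly positive cone (so that accumulation of an interior $\omega$-limit set at such a point would, by invariance and strong order-preservation, force it to also contain interior points whose forward orbit is attracted to $(u^*,0)$, contradicting invariance of $\omega((u_0,v_0))$ together with the local stability of $(u^*,0)$). Ruling out $(0,v^*)$ is the easiest sub-case, since its unstable direction can be exhibited explicitly using the linear analysis in the preceding lemma.
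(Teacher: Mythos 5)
Your overall strategy---reduce $\omega((u_0,v_0))$ to the equilibrium set and then exclude every equilibrium other than $(u^*,0)$ using Lemma~\ref{char} and the instability lemma---matches the skeleton of the paper's argument, but both load-bearing steps have gaps as you have set them up. First, Theorem A of \cite{Hsu} cannot be invoked for \eqref{model-2species-v2} the way it is for Theorem~\ref{main:1}: here the resident's dispersal matrix is the identity, so the $u$-subsystem decouples into $n$ independent scalar equations, its attractor $u^*$ is not strictly positive (it vanishes on the sink patches), the $u$-subsystem has many boundary equilibria, and the cross-couplings needed for irreducibility of the Jacobian in the competitive order can vanish under (A2), which only requires the $f_i,s_i,m_i$ to be monotone, not strictly so, and not differentiable. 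So neither strong monotonicity nor the hypotheses of Hsu--Smith--Waltman are available off the shelf, and ``$\omega((u_0,v_0))$ is contained in the equilibrium set'' does not come for free; this is precisely why the paper replaces the soft argument of Section~\ref{proof:main:1} by a bespoke one here.

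Second, the Butler--McGehee step you flag as the main obstacle is where the real content lies, and the paper closes it by a more elementary device that you should adopt: for any equilibrium $(\bar u,\bar v)$ with $\bar v\neq 0$, the strict inequality in Lemma~\ref{reduction} (strict because the sink patches force $S\Lambda-M$ to be non-scalar) produces a source patch $i\le k$ with $\bar u_i=0$ and $s_i(\bar v_i)f_i(\bar v_i)-m_i(\bar v_i)>0$. Since the per-capita growth rate of $u_i$ is a continuous function of the state, along any sequence $t_n\to\infty$ with $(u(t_n),v(t_n))\to(\bar u,\bar v)$ it is eventually bounded below by a positive constant, which is incompatible with $u_i(t_n)\to 0=\bar u_i$ for a solution with $u_i(0)>0$; the same argument disposes of equilibria $(\bar u,0)$ having $\bar u_i=0$ for some $i\le k$, leaving $(u^*,0)$ as the only admissible limit point. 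This avoids your unstable-manifold construction entirely, which is fortunate: the type-(iv) equilibria of Lemma~\ref{char} are not shown to be hyperbolic or even isolated, so the standard Butler--McGehee lemma does not obviously apply to them, and showing that their unstable manifolds ``protrude into the positive cone'' and are then attracted to $(u^*,0)$ presupposes the global convergence you are trying to prove.
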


\begin{proof}
	Consider $(\bar{u},\bar{v})$, where $(\bar{u},\bar{v})$ is a componentwise nonnegative equilibrium of (3.9). Note that if for some $i, \bar u_i\not= 0 $, then $g_i(\bar{u}_i+\bar{v}_i)=1$, which implies that $g_i(0)>1$.
	
	Consequently, if $g_i(\bar{u}_i+\bar{v}_i)\not=1$, $\bar{u}_i=0$. In particular, whenever $g_i(0)\le1$ and $\bar{v_i}\not=0$, $\bar u_i=0$. Note that $\bar v$ satisfies
	\begin{equation*}
	\bar v_i=\frac{s_i(\bar u_i+\bar v_i)}{m_i(\bar u_i+\bar v_i)}\sum^{n}_{j=1}D_{ij}f_j(\bar u_j+\bar v_j)\bar v_j,\quad 1\le i\le n.
	\end{equation*}
Since the matrix $ D=(D_{ij})$ is irreducible, either $\bar v\equiv 0$ or $\bar v_i>0$ for $i=1,...,n$. Suppose $\bar{v}\not\equiv0$. Let $S$, $\Lambda$ and $M$ be as in the proof of Lemma 6.1. By Lemma 2.2,
\[0=\rho(SD\Lambda-M)\le\rho(S\Lambda-M).\]
Consequently, there is an $i$ so that
\[s_i(\bar u_i+\bar v_i) f_i(\bar u_i+\bar v_i)-m(\bar u_i+\bar v_i)\ge0,\]
which implies that $g_i(\bar u_i+\bar v_i)\ge1$. Since $g_i(0)\le1$ for $j>k$, $s_j(\bar u_j+\bar v_j) f_j(\bar u_j+\bar v_j)-m(\bar u_j+\bar v_j)<0$ for $j>k$. So $S\Lambda-M$ is not a scalar matrix. Hence $\rho(SD\Lambda-M)<\rho(S\Lambda-M)$.

So there is an $i$ so that
\[s_i(\bar u_i+\bar v_i) f_i(\bar u_i+\bar v_i)-m(\bar u_i+\bar v_i)>0,\]
and thus
$g_i(\bar u_i+\bar v_i)>1$. Since $g_i(\bar u_i+\bar v_i)\not=1$, $\bar u_i=0$ and $g_j(0)\le1$ for $j>k$, $i\le k$.

Hence if $\bar v\not\equiv0$, there is an $i\in\{1,...,k\}$ so that $\bar{u}_i=0$ and $g_i(\bar{v}_i)>1$. Note that $g_i(\bar v_i)>1$ implies
\[s_i(\bar v_i)f_i(\bar v_i)-m_i(\bar v_i)>0.\]
Now let $(u(t),v(t))$ be any trajectory (in either discrete or continuous time) corresponding to $(u(0),v(0))=(u_0,v_0)$. (Here $u_0$ and $v_0$ are componentwise positive.) If there is a sequence of times $t_n\to \infty$ as $n\to \infty$ so that
\[(u(t_n),v(t_n))\to(\bar u,\bar v),
\]
then there is an $N$ so that for $n\ge N$
\begin{align*}
\frac{(\Delta u)_i(t_n)}{u_i(t_n)}&= s_i(u_i(t_n)+v_i(t_n))f_i(u_i(t_n)+v_i(t_n))-m_i(u_i(t_n)+v_i(t_n))\\
&>\frac{s_i(\bar v_i)f_i(\bar v_i)-m_i(\bar v_i)}{2}.
\end{align*}
Consequently, $u_i(t_n)\not\to0$. So there can be no $(\bar u,\bar v)$ with $\bar v\not\equiv 0$ for which $(\bar u,\bar v)\in \omega ((u_0,v_0))$. So it must be the case that $\bar v\equiv 0$ if $(\bar u,\bar v)\in \omega ((u_0,v_0))$.

Suppose there is a sequence of times $t_n\to \infty$ as $n\to \infty$ so that
\[(u(t_n),v(t_n))\to(\bar u,0).
\]
For $i\in\{1,...,k\}$, we have $\bar u_i=0$ or $\bar u_i=u^*_i$. If $\bar u_i=0$, there is an
 $N$ so that for $n\ge N$
\begin{align*}
\frac{(\Delta u)_i(t_n)}{u_i(t_n)}&= s_i(u_i(t_n)+v_i(t_n))f_i(u_i(t_n)+v_i(t_n))-m_i(u_i(t_n)+v_i(t_n))\\
&>\frac{s_i(\bar v_i)f_i(\bar v_i)-m_i(\bar v_i)}{2}\\
&>0.
\end{align*}
Since $g(u^*_i)=1$ implies $g_i(0)>1$, the only possibility is that $\bar u=u^*$. The upshot is that $(u^*,0)$ is globally attracting relative to initial data which is componentwise positive.

\end{proof}

\end{document}